\newcommand\ie{\emph{i.e.}}
\newcommand\eg{\emph{e.g.}}
\newcommand\etc{\emph{etc.}}
\newcommand\etal{\mbox{\itshape et al.}}
\newcommand\cH{\mathcal H}
\newcommand\xT{\mathsf T}
\newcommand\parStyle[1]{\textrm{\mdseries\upshape({#1}\kern0.1ex)}}
\newlength\romanumlabelwd
\def\tagform@#1{%
	\ifmmode
		\mbox{\normalsize\maketag@@@{(\ignorespaces#1\unskip\@@italiccorr)}}%
	\else
		\maketag@@@{(\ignorespaces#1\unskip\@@italiccorr)}%
	\fi}
\newtheoremstyle{theorem?}
  {\topsep}{\topsep}   																	
  {\itshape}{0pt}{\bfseries}{?}{5pt plus 1pt minus 1pt} 
  {}          																					
\theoremstyle{theorem?}
\def\@cludgescbf#1#2\end{\textbf{#1\scalefont{0.85}#2}}
\newcommand\cludgescbf[1]{\@cludgescbf#1\end}
\newtheoremstyle{Axiom}
  {\topsep}{\topsep}   																	
  {\itshape}{0pt}{\bfseries}{?}{5pt plus 1pt minus 1pt} 
  {}          																					
\theoremstyle{Axiom}
\theoremstyle{definition}
\newtheorem{definition}{Definition}
\newtheorem*{definition*}{Definition}
\theoremstyle{plain}
\newtheorem{theorem}{Theorem}
\newtheorem{lemma}[theorem]{Lemma}
\newtheorem*{proposition*}{Proposition}
\newtheorem*{claim*}{Claim}
\newtheorem{corollary}{Corollary}[theorem]
\DeclareMathOperator\Tr{Tr}
\newcommand\D{\mathbb{D}}
\newcommand\E{\mathbb{E}}
\newcommand\F{\mathbb{F}}
\newcommand\C{\mathbb{C}}
\newcommand\Z{\mathbb{Z}}
\newcommand\Q{\mathbb{Q}}
\renewcommand\vec\mathbf
\newcommand\ox{\otimes}
\newcommand\x{\times}
\let\subset\subseteq
\let\oldepsilon\epsilon
\let\epsilon\varepsilon
\let\varepsilon\oldepsilon
\let\le\leqslant
\let\ge\geqslant
\def\href#1#2{\texttt{#2}}
	\newcommand\ket[1]{\left| #1 \right\rangle\@ifnextchar\bra{\mspace{-4mu}}{}}
	\newcommand\bra[1]{\left\langle #1 \right|}
	\newcommand\bracket[2]{\left\langle #1 \left| #2 \right\rangle\right.}
\newcommand{\lin}[1]{{\rm L}\left(#1\right)}
\newcommand{\spa}[1]{\mathcal{#1}}
\newcommand{\complex}{{\mathbb C}}
\newcommand{\ketbra}[2]{\ket{#1}\!\bra{#2}}        
\newcommand{\set}[1]{{\left\{#1\right\}}}    
\newcommand{\tq}{\mbox{2-QSAT}}
\newcommand{\solveq}{\mbox{SOLVE$_{\mathrm Q}$}}
\newcommand\rev{^{\mathsf R}}
\renewcommand\labelenumii{\expandafter\@gobble\theenumii.}
\renewcommand\theenumii{\theenumi\alph{enumii}}
\renewcommand\labelenumiii{\expandafter\@gobble\theenumiii}
\renewcommand\theenumiii{\theenumii(\textit{\rmfamily\roman{enumiii}}\:\!)}
\begin{document}

\title{A linear time algorithm for quantum $2$-SAT}
\author{Niel de Beaudrap\footnote{Department of Computer Science, University of Oxford, Oxford UK, OX1 3QD. Email: niel.debeaudrap@cs.ox.ac.uk.} \and Sevag Gharibian\footnote{Department of Computer Science, Virginia Commonwealth University, Richmond, VA 23284, USA. Email: sgharibian@vcu.edu.}}

\date{}
\maketitle
\begin{abstract}
	The Boolean constraint satisfaction problem $3$-SAT is arguably the canonical NP-complete problem. In contrast, $2$-SAT can not only be decided in polynomial time, but in fact in deterministic linear time. In 2006, Bravyi proposed a physically motivated generalization of $k$-SAT to the quantum setting, {defining the problem ``quantum $k$-SAT''}. He showed that quantum $2$-SAT is also solvable in polynomial time on a classical computer, in particular in deterministic time $O(n^4)$, {assuming unit-cost arithmetic over a field extension of the rational numbers,} where $n$ is number of variables.
  In this paper, we present an algorithm for quantum $2$-SAT which runs in linear time, \ie~deterministic time $O(n+m)$ for $n$ and $m$ the number of variables and clauses, respectively. Our approach exploits the transfer matrix techniques of Laumann \emph{et al.} [QIC, 2010] used in the study of phase transitions for random quantum $2$-SAT, and bears similarities with both the linear time $2$-SAT algorithms of Even, Itai, and Shamir (based on backtracking) [SICOMP, 1976] and Aspvall, Plass, and Tarjan (based on strongly connected components) [IPL, 1979].
\end{abstract}

\section{Introduction}\label{scn:intro}
Boolean constraint satisfaction problems lie at the heart of theoretical computer science. Among the most fundamental of these is $k$-SAT, in which one is given a formula $\phi$ on $n$ variables, consisting of a conjunction $\phi(x) = C_1 \wedge C_2 \wedge \cdots \wedge C_m$ of $m$ clauses, each of which is a disjunction of $k$ literals, \eg~$(x_h \vee \bar x_i \vee x_j)$ for $1 \le h,i,j \le n$.
The problem is to determine whether there exists an assignment $x\in\set{0,1}^n$ which simultaneously satisfies all of the constraints $C_i$, \ie~for which $\phi(x)=1$.
While \mbox{$3$-SAT} is NP-complete~\cite{C72,L73,K72}, $2$-SAT admits a number of polynomial time algorithms~(\eg~\cite{DP60,K67,EIS76,APT79,P91}), the fastest of which require just linear time~\cite{EIS76,APT79}.

In 2006, Bravyi~\cite{B06} introduced \mbox{$k$-QSAT}, a problem which generalizes
 \mbox{$k$-SAT}, {as follows.
In place of clauses $C_i$, acting on $k$-bit substrings of $n$ bit strings $x \in \{0,1\}^n$, one considers orthogonal projectors $\bar\Pi_{i}$ which act on $k$-qubit subsystems of an $n$-qubit system $\ket{\psi} \in \cH^{\otimes n}$, where $\cH := \C^2$.
(A sketch of how $k$-SAT can be embedded into $k$-QSAT is given in Section~\ref{scn:preliminaries}.)
These projectors extend to act on states $\ket{\psi}$ by defining $\Pi_i = \bar\Pi_i \ox I$, so that $\Pi_i$ acts as the identity on all tensor factors apart from those qubits on which $\bar\Pi_i$ is defined.
One then considers $\ket{\psi}$ to ``satisfy'' the \tq\ instance if $\Pi_i \ket{\psi} = 0$ for all $i$.}
This formulation may be motivated, \eg,~by problems in many-body physics~\cite{dBOE10,CCDJZ11}.
While $3$-QSAT is complete for QMA$_1$~\cite{B06,GN13} (a quantum generalization\footnote{Here, Quantum Merlin Arthur (QMA) is the quantum analogue of Merlin-Arthur (MA) in which the proof and verifier are quantum, and QMA$_1$ is QMA with perfect completeness. Unlike the classical setting, in which MA is known to admit perfect completeness~\cite{ZF87,GZ11}, whether QMA$=$QMA$_1$ remains open (see e.g.~\cite{JKNN12}).} of NP), \tq\  is solvable in deterministic polynomial time~\cite{B06}, using $O(n^4)$ field operations over $\complex$. 

Given the existence of linear time algorithms for classical $2$-SAT, this raises the natural question:
    \emph{Can \tq\  also be solved in linear time?}
Our main result in this paper is as follows.

\begin{theorem}\label{thm:main}
    There exists a deterministic algorithm \solveq\ which, given an instance of \tq, outputs a representation of a satisfying assignment if one exists (presented as a list of one- and two-qubit unit vectors to be taken as a tensor product), and rejects otherwise.
		The algorithm halts in time $O(n+m)$ on inputs on $n$ qubits with $m$ projectors (assuming unit-cost operations over $\C$).
    Furthermore:
    \begin{itemize}
		\item
			\solveq\ can produce its output using ${O((n+m) M(n))}$ bit operations, where $M(n)$ is the asymptotic upper bound on the cost of multiplying two $n$ bit numbers;
		\item
			If the projectors are all product projectors, the algorithm \solveq\ requires only $O(n+m)$ bit operations regardless of what computable subfield $\F \subset \C$ the projector coefficients range over.
    \end{itemize}
\end{theorem}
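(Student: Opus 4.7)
The plan is to generalize the classical implication-graph approach for $2$-SAT to the quantum setting, using transfer matrices to propagate constraints along edges of the projector hypergraph. I would first preprocess by separating the projectors into rank-one product projectors $\ket{a}_i\ket{b}_j\bra{a}_i\bra{b}_j$ versus those whose image vector $\ket{\psi}$ is entangled. The entangled projectors exhibit the most rigid behaviour: if $\ket{\psi}\in\cH\ox\cH$ on qubits $i,j$ is entangled, then $\Pi\ket{\alpha}_i\ket{\beta}_j=0$ forces $\ket{\beta}$ to be an \emph{invertible} linear function of $\ket{\alpha}$, giving a transfer matrix $T_{ij}$ derived from the Schmidt form of $\ket{\psi}$. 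Product projectors instead behave like classical $2$-SAT clauses: they enforce the disjunction ``qubit $i\perp\ket{a}$ or qubit $j\perp\ket{b}$''.

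The core of \solveq\ would then run in two interleaved phases. First, on the subgraph induced by the entangled projectors, I run BFS/DFS from an arbitrary root in each connected component, composing transfer matrices along tree edges so that every qubit's single-qubit state is expressed as a fixed invertible linear image of the root's state. Each non-tree (cycle-closing) edge produces an additional constraint of the form $M\ket{\alpha}_{\textup{root}}\propto \ket{\alpha}_{\textup{root}}$ for some $2\x 2$ matrix $M$ that is the discrepancy between the old and new transfer-matrix paths; satisfiability of a component is then equivalent to the solvability of a small eigenvector problem on the root, which is either unsatisfiable, uniquely forces the root's state, or leaves it free. This mirrors the Laumann \emph{et~al.}\ transfer matrix analysis and costs $O(1)$ work per entangled projector with union-find-like bookkeeping.

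Next, the product projectors are processed in the spirit of Even--Itai--Shamir backtracking. Each product projector either is immediately satisfied (because the two incident qubits are already incompatible with $\ket{a}$ and $\ket{b}$ respectively), or it forces one incident qubit into a specific single-qubit state. A forcing on a qubit $v$ that sits in an entangled component with root $r$ is pushed, through the precomputed transfer matrix from $r$ to $v$, into a forcing on $r$; this in turn fixes the entire component via the stored linear relations, touching each qubit of the component only a constant number of times in total. Any attempt to force a qubit to two non-proportional states triggers rejection; otherwise, once propagation quiesces, the algorithm emits its answer as a tensor product of the root eigenstates and the one-qubit factors read off the transfer-matrix tree, possibly grouped into two-qubit factors for the few singleton cycle-constrained pairs. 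For the bit-complexity statements, arithmetic is simulated in the computable subfield $\F\subset\C$ generated by the input coefficients; the only operations that blow up representation size are $2\x 2$ matrix multiplications and inversions for composing transfer matrices, which are absent when every projector is a product, giving the sharper $O(n+m)$ bit bound in that case.

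The main obstacle I expect is amortising the propagation across the two kinds of projectors so that the total work is genuinely $O(n+m)$: a single product projector may, late in the execution, fix the root of a large entangled component whose internal structure was computed much earlier, and this fixing must fan out without revisiting edges. Making this work cleanly requires that the transfer matrices from the spanning-tree phase be stored so that any root-to-qubit query is answered in $O(1)$, and that each qubit be marked ``fixed'' only once. A further delicate point is the cycle analysis for entangled components when the discrepancy matrix $M$ is non-diagonalisable (a Jordan block), where satisfiability must still be decided correctly despite the degeneracy of the eigenvector equation.
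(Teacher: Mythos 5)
Your overall architecture --- transfer matrices for entangled constraints, spanning trees whose cycle-closing edges discretize the root's state, and EIS-style handling of product constraints --- is recognizably the same toolkit as the paper's, though you run the phases in the opposite order (the paper eliminates product constraints first and only then analyses the residual ``irreducible'' instance of purely entangled constraints, so that a single chain reaction covers a whole component). As written, however, there are two genuine gaps. First, your treatment of product projectors breaks down at exactly the hard case: a product projector $\ketbra{a}{a}_i \ox \ketbra{b}{b}_j$ between two \emph{as-yet-undetermined} qubits is a true disjunction (``$i$ is $\perp \ket{a}$ \emph{or} $j$ is $\perp \ket{b}$''), and neither branch is forced, contrary to your claim that each product projector is either already satisfied or forces one endpoint. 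You must make a trial assignment, and a failed trial may have propagated through many edges before a conflict appears. The paper resolves this with three ingredients you would need to supply: a Product Constraint Lemma (in a product-satisfiable instance, at least one of the two trial chain reactions, at $i$ with $\ket{a^\perp}$ or at $j$ with $\ket{b^\perp}$, is conflict-free), a parallel interleaved simulation of the two trials so the work is charged to the edges removed by whichever trial finishes conflict-free first, and a Set-and-Forget theorem showing that committing to a conflict-free trial and deleting the assigned qubits preserves satisfiability (the assigned states unilaterally satisfy every constraint crossing the cut). Without these, both correctness and the $O(n+m)$ amortization are unproven --- and you yourself flag the amortization as unresolved.

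Second, your root-eigenvector analysis will wrongly reject satisfiable instances in which some pair of qubits carries constraints summing to a rank-3 operator (e.g.\ $I_4 - \ketbra{\Psi^-}{\Psi^-}$ split into three rank-1 entangled projectors): the resulting transfer matrices from $i$ to $j$ admit no common compatible input state, so your eigenvector problem has no solution, yet the instance is satisfied by the entangled two-qubit kernel state. This is precisely why the theorem's output format allows two-qubit factors, and why the paper prepends a Chen-et-al-style preprocessing phase that detects rank-3 (and rank-4) constraint sums, assigns the kernel state, propagates its consequences, and only then reduces to a product-satisfiable rank-1 instance. Minor further points: a non-identity cycle matrix generically leaves \emph{two} candidate root states rather than a unique forcing, so parallel trials are needed there as well; the non-diagonalisable case you worry about is benign (a single eigenvector means a single candidate); and for the $O((n+m)M(n))$ bullet the input field must be a finite-degree number field, with the quadratic extensions arising from characteristic polynomials handled explicitly --- ``the computable subfield generated by the coefficients'' only suffices for the all-product case.
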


\noindent\emph{Remarks:} {The setting of product constraints above includes classical $2$-SAT (see Section~\ref{scn:preliminaries}): in this case the bit-complexity of our algorithm matches optimal $2$-SAT algorithms~\cite{EIS76,APT79}.
For more general instances of \tq, the $O((m+n) M(n))$ bit-complexity of our algorithm is compares favourably to the complexity of extracting a satisfying assignment using Bravyi's \tq\ algorithm, which requires $O(n^4 M(n))$ bit operations if one uses similar algebraic algorithms to ours.
In ``Significance and open questions'' below, we discuss the question of field-operation-complexity vs.\ bit-complexity, as well as whether our algorithm is tight in terms of bit complexity.}


\paragraph{Techniques employed.} The origin of this work is the observation that Bravyi's \tq\  algorithm can be thought of as an analogue of Krom's $2$-SAT algorithm~\cite{K67}, which involves computing the transitive closure of directed graphs. Krom's algorithm repeatedly applies a fixed inference rule for each pair of clauses sharing a variable. The repeated application of the inference rule leads to an $O(n^3)$ time to determine satisfiability and an $O(n^4)$ time to compute a satisfying assignment. Bravyi's algorithm has the same runtimes, measured in terms of the number of field operations.

This work aims to develop a quantum analogue of Aspvall, Plass, and Tarjan's (APT) linear time \mbox{$2$-SAT} algorithm~\cite{APT79}, which reduces $2$-SAT to computing the strongly connected components of a directed graph.
Note that classically $(\alpha \vee \beta)$ is equivalent to $(\bar \alpha \Rightarrow \beta)$ and $(\bar \beta \Rightarrow \alpha)$, for literals $\alpha$ and $\beta$.
APT constructs an \emph{implication graph} $G$ of a $2$-SAT instance $\phi$, with vertices labelled by literals $x_i$ and $\bar x_i$ for each $i$, and edges ${\bar \alpha \to \beta}$ and ${\bar \beta \to \alpha}$ for each clause $(\alpha\vee \beta)$.
Then, they show that $\phi$ is satisfiable if and only if $x_i$ and $\bar x_i$ are not in the same strongly connected component of $G$ for any $i$~\cite{APT79}. As the strongly connected components of $G$ can be computed in linear time~\cite{T72}, this yields a linear time algorithm for $2$-SAT.

In the quantum setting, not all $n$-qubit states can be described by assignments to individual qubits (\eg,~entangled states).
Fortunately, Chen~\etal~\cite{CCDJZ11} show that we may reduce any instance of \tq\ to an instance which is satisfiable if and only if there is a satisfying state, in which qubits have separate assignments (see Section~\ref{scn:preliminaries} for details).
In this setting, there is a natural analogue of the equivalence ${(x_i\vee x_j)}\equiv {(\bar x_i\Rightarrow x_j)}\wedge{(\bar x_j\Rightarrow x_i)}$ in terms of so-called ``transfer matrices'' (e.g.~\cite{B06,LMSS10}). For any rank-$1$ quantum constraint $\Pi_{ij}\in \lin{\complex^2\otimes \complex^2}$ on qubits $i$ and $j$, there exists a \emph{transfer matrix} $\xT_{ij}\in \lin{\complex^2}$, such that for any assignment $\ket{\psi_i}$ to qubit $i$ such that $\xT_{ij}\ket{\psi_i}\neq 0$, the state on qubit $j$ for which the constraint $\Pi_{ij}$ is satisfied is given by $\xT_{ij}\ket{\psi_i}$.\footnote{%
	The usual convention is to describe quantum states by unit vectors in $\C^2$, albeit up to equivalence under multiplication by $z \in \C$ for $|z| = 1$.
	However, vectors produced via transfer matrices might not be normalised.
	As we are not explicitly concerned with the probabilities of any measurement outcomes obtained from quantum processes, we represent quantum states by vectors which are equivalent up to multiplication by arbitrary scalar factors.}		
(Conversely, for any $\xT_{ij} \in \lin{\C^2}$, there is a unique rank-1 orthogonal projector $\Pi_{ij} \in \lin{\C^2 \ox \C^2}$ whose nullspace is spanned by $\ket{\psi_i} \ox \xT_{ij}\ket{\psi_i}$ for $\ket{\psi_i}$ ranging over $\C^2$.)
This suggests a quantum analogue $G$ of an implication graph: For each possible assignment $\ket{\psi}$ to a qubit $i$, we define a vertex ${(i,\ket{\psi})}$, and include a directed edge ${(i,\ket{\psi})} \to {(j,\ket{\phi})}$ if there is a transfer matrix $\xT_{ij}$ (corresponding to some constraint $\Pi_{ij}$) such that $\xT_{ij}\ket{\psi}=c\ket{\phi}$ for some $c \ne 0$.
We then ask if for each qubit $i$, there is a vertex $(i,\ket{\psi_i})$ which cannot reach any $(i,\ket{\psi'_i)}$ where $\ket{\psi_i} \not\propto \ket{\psi'_i}$.
If there are such paths $(i,\ket{\psi_i}) \to \cdots \to (i,\ket{\psi'_i)}$ for all $\ket{\psi_i}$, this is analogous to $x_i$ and $\bar x_i$ being in a common strong component in the APT algorithm.

As it stands, this approach has a shortcoming: In the quantum regime, each qubit has a \emph{continuum} of possible assignments (rather than two), which may generate unbounded orbits in an APT-style algorithm.
However, by applying techniques of Laumann \emph{et al.}~\cite{LMSS10} from the study of phase transitions in random \tq, we may in some cases reduce the set of possible assignments for a qubit $i$ to one or two.
Consider the \emph{interaction graph} $G'$ of a \tq\ instance, in which vertices correspond to qubits, and two vertices are connected by an (undirected) edge if the corresponding qubits $i$ and $j$ are subject to a constraint $\Pi_{ij}$. Suppose $C=(v_1,\ldots,v_t,v_1)$ is a cycle in $G'$, with transfer matrices $\xT_{v_iv_{i+1}}$ arising from each constraint $\Pi_{v_i v_{i+1}}$\,, and compute $\xT_C:=\xT_{v_tv_1}\cdots \xT_{v_2v_3}\xT_{v_1v_2}$. If $\xT_C$ has a non-degenerate spectrum, then the only possible satisfying assignments for $v_1$ are eigenvectors of $\xT_C$~\cite{LMSS10} (see also Lemma~\ref{l:inconsistency}).
In effect, computing $\xT_C$ ``simulates'' uncountably many (!) traversals ${(i,\ket{\alpha})} \to \cdots \to {(i,\ket{\beta})}$ in $G$; restricting to the eigenvectors of $\xT_C$ corresponds to ignoring vertices in $G$ which are infinitely far from the top of any topological order of $G$.
If we hence {describe} cycles $C$ with non-degenerate $\xT_C$ as \emph{discretizing}, this suggests the approach of finding a discretizing cycle at each qubit $i$, and using it to reduce the number of possible states on $i$ to one or two.
This simple principle is the starting point of our work.

Despite this simplicity, some obstacles must be addressed to obtain a linear-time algorithm.
In the setting of random \tq~\cite{LMSS10}, {every cycle $C$ is a discretising cycle with probability one, as there is zero probability that either a transfer matrix is singular, or that a product of them has a degenerate spectrum.
This allows one to quickly reduce the space of assignments possible for a qubit.
In contrast, in {our setting (\ie,~worst case analysis)}, we cannot assume such a distribution of transfer matrices arising from a \tq\ instance.
For instance, any constraint $\Pi_{ij}$ corresponding to a product operator (\eg, a classical \mbox{2-SAT} constraint) has a singular transfer matrix, which when multiplied with other singular matrices may give rise to a singular cycle matrix.}
Even if a discretising cycle $C$ does exist using some of the edges $jk$, $k\ell$, \ldots, we may have to traverse those edges multiple times to discover $C$, which is worrisome for a linear-time algorithm. 
Furthermore, we must address the case in which there are no discretising cycles at all to get a discrete algorithm started.
In order to demonstrate a \emph{linear}-time algorithm for \tq\ in the spirit of APT, these problems must be carefully addressed.

Our approach to resolve these issues is as {follows.  In} an instance of \tq\ in which all transfer matrices are non-singular, we show that discretising cycles are easy to find if they exist, and that the absence of discretising cycles allows one to easily obtain a satisfying state.
If, on the other hand, singular transfer matrices are {present, the} corresponding product constraints $\Pi_{ij} = \ketbra{\alpha}{\alpha}_i \ox \ketbra{\beta}{\beta}_j$ \emph{themselves} impose a different discretising {influence: If} $\ket{\smash{\alpha^\bot}}$ and $\ket{\smash{\beta^\bot}}$ are states orthogonal to $\ket{\alpha}$ and $\ket{\beta}$ respectively, then at least one of the assignments $(i,\ket{\smash{\alpha^\bot}})$ or $(j,\ket{\smash{\beta^\bot}})$ is required for a satisfying assignment.
This leads us to adopt an approach of ``trial assignments'' which is highly reminiscent of another linear-time \mbox{2-SAT} algorithm due to Even-Itai-Shamir~\cite{EIS76}, which attempts to reduce to an instance of \tq\ with fewer product constraints by determining partial assignments satisfying $\Pi_{ij}$.
(For simplicity, we also adopt the approach of trial assignments for qubits whose state-space have been reduced by discretizing cycles.)
{This leads us to our} algorithm \solveq\ {(Figure~\ref{fig:algorithmq}, in Section~\ref{scn:alg})}, which combines elements of both the Even-Itai-Shamir~\cite{EIS76} and Apsvall-Plass-Tarjan~\cite{APT79} linear-time $2$-SAT algorithms as described above.


\paragraph{Previous work.} There is a long history of polynomial time solutions for classical $2$-SAT~\cite{Q59, DP60,K67,EIS76,APT79,P91}, ranging from time $O(n^4)$ to $O(n+m)$. As we mention above, the most relevant of these to our setting are the algorithms of Even, Itai, and Shamir~\cite{EIS76} (based on limited backtracking) and Aspvall, Plass, Tarjan~\cite{APT79} (based on strongly connected component detection).

In contrast, little work has been performed in the quantum setting.
Until recently, Bravyi's algorithm was the only explicitly articulated algorithm for \tq, and requires $O(n^4)$ field operations and $O(n^4M(n))$ bit operations. Other work on \tq\ instead concerns either the structure of the solution space of instances of \tq~\cite{LMSS10,dBOE10,CCDJZ11}, or bounds on counting complexity~\cite{JWZ11,Beaudrap14}.


Propagation of assignments using transfer matrices is present already in Bravyi~\cite{B06}, and the results of Laumann~\etal~\cite{LMSS10} allow us to restrict the possibly satisfying states on single qubits by finding discretising cycles.
We incorporate these into efficient discrete algorithms for testing possible assignments, and provide a cost analysis in terms of field operations and bit operations. In contrast to the random \tq\ setting of~\cite{LMSS10}, we do not assume any particular distribution on constraints.

\emph{Note:} Very recently, Arad~\etal~\cite{ASSZ-2015} independently presented an algorithm for \tq, which also runs in $O(n+m)$ time  using unit-cost field operations.
The overall structure of our algorithm appears similar to theirs, though our treatment of the key issue of \tq~instances with only entangled constraints appears to use different techniques (in particular, Ref.~\cite{ASSZ-2015} appears to be based on results of Ji, Wei, Zeng~\cite{JWZ11} which modify the instance itself, whereas we use ideas of~\cite{LMSS10} to tackle the existing instance via the concept of discretizing cycles). As well as obtaining an upper bound on field operations matching Ref.~\cite{ASSZ-2015}, we also include an analysis of the bit complexity of our algorithm \solveq, and in particular indicate how our algorithm matches the asymptotic bit complexity of the best algorithms on classical instances of 2-SAT.


\paragraph{Significance and open questions.}
\label{para:significance}%
Quantum $k$-SAT and its optimization variant, $k$-LOCAL HAMILTONIAN~\cite{KSV02}, are central to quantum complexity theory, being canonical QMA$_1$- and QMA-complete problems for $k \ge 3$ and $k \ge 2$ respectively~\cite{B06,GN13,KSV02,KKR06}, just as \mbox{$k$-SAT} and \mbox{MAX-$k$-SAT} are central to classical complexity theory as canonical NP-complete problems.
Moreover, quantum $k$-SAT may be motivated {using the notion of \emph{frustration-freeness} in many-body physics~\cite{H06,BT09}.} It is thus natural to ask what the minimal resources to solve quantum $k$-SAT are.

{We now discuss} the number of field operations used by our algorithm, $O(m+n)$, versus the number of bit operations, $O((m+n)M(n))$, in Theorem~\ref{thm:main}. There is no such distinction in the complexity of existing \mbox{$2$-SAT} algorithms: As bits have only a finite range of values, traversing a chain of implications in the implication graph poses no precision issues.
In the quantum setting, however, such a traversal involves computing products of $O(n)$ transfer matrices over some field extension of the rationals.
Trial assignments resulting from these products may require $O(n)$ bits per entry to represent; testing whether two possible assignments are equivalent may involve multiplying pairs of $n$-bit integers.
This is the source of the $M(n)$ term in the bit complexity estimate of Theorem~\ref{thm:main}.
To compare, similar considerations applied to Bravyi's \tq\ algorithm gives an upper bound of $O(n^4M(n))$ bit operations.

{It is not obvious that a faster runtime in terms of bit complexity should be possible in general.
As we show in Section~\ref{scn:lower bounds}, it is simple to construct a \tq\ instance with $m\in O(n)$ and whose unique product state solution requires $\Theta(n^2)$ bits to write down.
Thus, among algorithms which explicitly output the entire solution, our algorithm is optimal up to $\log$ factors, {taking time {$O(nM(n))\in \widetilde{O}(n^2)$} for $M(n)\in O(n \log(n) \,2^{O(\log^*(n))})$~\cite{F07}}.
Furthermore, as we also show in Section~\ref{scn:lower bounds}, for any algorithm $\mathbf A$ for \tq\ which produces the marginal of a satisfying solution (if one exists) on a single qubit in reduced terms\footnote{%
	N.B.\ Our algorithm \solveq\ is not such an algorithm, as the output may include cancellable factors in its representation.
}, there is a linear-time reduction from multiplication of $n$ bit integers to {the problem solved by $\mathbf A$}.
It follows that such an algorithm $\mathbf A$ must run in time $\Omega(M(n))$. {As discussed in} Section~\ref{scn:lower bounds}, this implies that unless $M(n)\in O(n)$, there is no general algorithm for \tq\ with linear bit complexity if the output is required to be in reduced form.
}


Theorem~\ref{thm:main} does reveal a special case in which we \emph{can} recover linear bit complexity: namely, when all constraints are product operators, {such as classical 2-SAT instances.} 
This special case still has essentially quantum features.
{For instance, the} phase transitions for satisfiability and for counting complexity in random product-constraint instances more closely resemble those of \tq\ than of classical \mbox{2-SAT}~\cite{Beaudrap14}.
Furthermore, ``YES'' instances of product-constraint \tq\ (e.g. such as instance $\{ \ket{00}\bra{00},\, \ket{11}\bra{11},\, \ket{++}\bra{++}\}$ on two qubits) may not be satisfiable by product states (see Section~\ref{scn:preliminaries} for how our algorithm deals with such instances).

{We close with open questions: is the bit-complexity of $O((n+m)M(n))$ for producing explicit assignments optimal?
Is there an $O(M(n))$ upper bound for producing representations of marginals of satisfying assignments?}

\paragraph{Organization of this paper.} In Section~\ref{scn:preliminaries}, we give notation, definitions, and the basic framework for our analysis (including transfer matrices). Section~\ref{scn:reductions} presents a series of lemmas and theorems to demonstrate how to overcome the obstacles presented in this introduction, and which form the basis of a proof of correctness for our algorithm \solveq. Section~\ref{scn:alg} states \solveq. Section~\ref{scn:runtime} sketches bounds on the runtime of \solveq~in terms of the field operations and bit operations. Additional technical details are deferred to Appendix~\ref{apx:runtime}. Section~\ref{scn:lower bounds} discusses lower bounds on the bit complexity of \tq.


\section{Preliminaries}\label{scn:preliminaries}

We begin by setting notation, stating definitions, and laying down the basic framework for our algorithm, including details on transfer matrices.

\paragraph{Notation.} The notation $:=$ denotes a definition and $[n]:=\set{1,\ldots,n}$. The vector space of (possibly non-normalised) single-qubit pure states is denoted $\cH := \C^2$. For a string $x=x_1x_2\cdots x_n\in\set{0,1}^n$, we write $\ket{x}:=\ket{x_1}\otimes\cdots\otimes\ket{x_n}$. For a vector space $\mathcal X$ over $\C$, we write $\lin{\mathcal X}$ 
for the set of linear 
operators on $\spa{X}$. The {nullspace of an} operator $A$ is denoted $\ker(A)$. For vectors $\ket{\psi}$ and $\ket{\phi}$, we write $\ket{\psi}\propto\ket{\phi}$ if $\ket{v}=c\ket{w}$ for \emph{non-zero} $c\in\complex$; if we wish to also allow $c=0$, we write ${\ket{\psi}\propto^\ast\ket{\phi}}$ instead. The latter two definitions extend straightforwardly to matrices. Given $\ket{\psi}\in\cH$, we write $\ket{\smash{\psi^\perp}}$ for the unique vector (up to scalar factors) {which is} orthogonal to $\ket{\psi}$.

{
\subsection{Quantum 2-SAT}
We now present a formal definition of quantum $k$-SAT (or $k$-QSAT).%
\begin{definition}[Quantum $k$-SAT~\cite{B06}]
    Let $n \ge k$ be an integer, and $\set{\Pi_i}_{i=1}^m\subset \lin{\cH^{\otimes k}}$ be a set of $k$-local orthogonal projection operators {(\ie,~of the form $I\otimes \bar{\Pi}_i$ for $k$-qubit projectors $\bar{\Pi}_i$)} with coefficients over some number field $\F$.
    \begin{description}[labelindent=2em,font=\mdseries\itshape]
    \item[Decision problem.] Does there exist a state $\ket{\psi}\in\cH^{\otimes n}$ such that $\Pi_{i}\ket{\psi}=0$ for all $i\in[m]$?
		\item[Search problem.] Produce a description of such a state $\ket{\psi}$ if it exists.
    \end{description}
\end{definition}
}
For precision reasons, we require in particular that the coefficients are drawn from a number field (a finite-degree field extension $\F = \Q[\omega]$).
We suppose that $\F$ is also specified as part of the input by means of a minimal polynomial $p \in \Q[x]$ for which $\F \cong \Q[x] / p$, together with a specification of how $\F$ embeds into $\C$~\cite{cohen93}.
(More details are given in Appendix~\ref{sscn:bitcomplexity}, where the runtime of the algorithm is carefully analyzed.)
{In the literature for \tq, one is usually more interested in how the structure of the placement of the projectors $\Pi_i$ affects the solution space, rather }than the complexity of the specification of $\F$ or the coefficients.
We therefore suppose that there is some constant $K$ which bounds from above the size of the specification of $\F$, and of the coefficients of the operators $\Pi_i$.

We next sketch how a $2$-SAT instance $\phi$ can be embedded into $2$-QSAT (cases $k > 2$ are similar). For each clause $C$ on boolean variables $(x_a,x_b)$, we define an operator $\Pi_C \in \lin{\cH^{\otimes 2}}$ of the form $\Pi_C := \ketbra{c_a}{c_a} \ox \ketbra{c_b}{c_b}$, where $c_a = 1$ if the variable $x_a$ is negated in $C$, and $c_a = 0$ otherwise; we fix $c_b$ similarly.
Then $\Pi_C$ is satisfied by $\ket{x_a x_b} \in \cH^{\otimes 2}$ if and only if $C$ is satisfied by $x_a x_b \in \{0,1\}^2$.
We extend $\Pi_C$ to an operator on $\cH^{\otimes n}$ by taking its tensor product with $I_2 \in \lin{\cH^{\otimes n-2}}$ on all tensor factors $i$ apart from $a,b \in [n]$.
Performing this for all clauses yields an instance of \tq, $\{ \Pi_C \}$, {in which all of the projectors are product operators (as mentioned in Section~\ref{scn:intro}),} and which imposes the same constraints on standard-basis vectors $\ket{x}$ as the clauses $C$ impose on $x \in \{0,1\}^n$.
Furthermore, as each $\Pi_C$ is positive semidefinite and diagonal, any $\ket{\psi}$ for which $\Pi_C \ket{\psi} = 0$ for all clauses $C$ must be a linear combination of vectors $\ket{x}$ which also satisfy $\Pi_C \ket{x} = 0$ for all $C$.
Thus this instance of \tq\ is satisfiable if and only if the original instance of \mbox{2-SAT} is, in which case there is a bijection between the solution space of the \mbox{2-SAT} instance and a basis for the solution-space of the \tq\ instance.

Finally, for a given \tq\ instance, we denote by $G$ its (potentially infinite) \emph{implication graph} (defined in Section~\ref{scn:intro}), and by $G'$ its \emph{interaction graph}, whose vertices are labelled by qubits, and with a distinct edge between vertices $i,j$ for each projector acting on them.

\paragraph{Reduction to cases satisfied by product states.}
We mainly consider product-state solutions to instances of \tq, in spite of instances (such as those described in ``Significance and open questions'' in Section~\ref{scn:intro}) in which no product state can be a solution.
A paradigmatic example is given by a single constraint $\Pi_\ast = I_4 - \ketbra{\Psi^-}{\Psi^-}$, where $\ket{\Psi^-} := {(\ket{01} - \ket{10})/\sqrt{2}}$; the unique satisfying assignment is the entangled state $\ket{\Psi^-}$.
Chen~\etal~\cite{CCDJZ11} nevertheless show that all instances of \tq\ are ``almost'' product-satisfiable in the following sense: The only pairs of qubits $(i,j)$ which are entangled for all satisfying states are those for which the sum of all constraints on $(i,j)$ is an operator $S_{ij}$ of rank $3$ (as with $\Pi_\ast$ above). We may treat such pairs by imposing the unique assignment $\ket{\psi_{ij}} \in \ker(S_{ij})$, and considering what restrictions this imposes on other qubits $k$ as a result of constraints on $(i,k)$ or $(j,k)$.
If we find no conflicts as a result of all such assignments, we obtain a sub-problem which is either unsatisfiable, or satisfiable by a product state.
(We describe this reduction in more detail in Section~\ref{scn:alg}.)




\paragraph{{Reduction to rank-1 instances.}} %
We may require that all constraints have rank $1$ (but possibly with multiple constraints on pairs of qubits%
),
by decomposing projectors $\Pi_{ij}$ of higher rank into rank-1 projectors $\Pi_{ij,1}$\,, $\Pi_{ij,2}$\,, \ldots, for which $\Pi_{ij} = \sum_k \Pi_{ij,k}$.
By the preceding reduction to product-satisfiable constraints, there will then be at most two independent constraints acting on any pair $(i,j)$.

\subsection{Transfer matrices}\label{sscn:reddefs}

A central tool in this work is the \emph{transfer matrix}, which for product states generalizes the equivalence between ${(x_i\vee x_j)}$ and ${(\bar x_i\Rightarrow x_j)}\wedge{(\bar x_j\Rightarrow x_i)}$ for bits.
Consider a rank-1 constraint $\Pi_{ij} = \ketbra{\phi}{\phi}$ on qubits $i$ and $j$, where $\ket{\phi}$ has Schmidt decomposition $\ket{\phi} = \alpha \ket{a_0}\ket{b_0} + \beta\ket{a_1}\ket{b_1}$. Then, the \emph{transfer} matrices $\xT_{\phi,ij},\xT_{\phi,ji}\in\lin{\complex^2}$ from $i$ to $j$ and from $j$ to $i$ are respectively given by:
{\begin{align}\label{eqn:transfer}
  \xT_{\phi,ij} &= \beta \ket{b_0}\bra{a_1} - \alpha \ket{b_1}\bra{a_0}, &
  \xT_{\phi,ji} &= \beta \ket{a_0}\bra{b_1} - \alpha \ket{a_1}\bra{b_0}.
\end{align}}
(When the state $\ket{\phi}$ is clear from context, we simply write $\xT_{ij}$ and $\xT_{ji}$.)
Given any assignment $\ket{\psi_i}\in\complex^2$ on qubit $i$, the transfer matrix $\xT_{\phi,ij}$ prescribes which single-qubit states $\ket{\psi_j}$ on $j$ are required to satisfy $\Pi_{ij}$, {via the constraint
$
 \ket{\psi_j} \propto^\ast \xT_{\phi,ij}\ket{\psi_i}.
$}
If $\xT_{\phi,ij}\ket{\psi_i}\neq 0$, then $\ket{\psi_j}$ is uniquely determined (up to equivalence by a scalar factor). This is guaranteed when $\ket{\phi}$ has Schmidt rank $2$, as $\xT_{\phi,ij}$ then has full rank. On the other hand, if $\xT_{\phi,ij}\ket{\psi_i}= 0$, then $\Pi_{ij}$ is satisfied for any assignment on $j$, so that $j$ remains unconstrained. This situation may only occur if $\ket{\phi}$ is a product constraint, so that $\xT_{\phi,ij}$ has a nullspace of dimension $1$.
This generalises the effect in the classical setting, that assigning ${x_i := 1}$ satisfies the constraint ${C = (x_i \vee x_j)}$, regardless of the value of $x_j$: the corresponding constraint and transfer matrix are $\ket{\phi} = \ket{00}$ and $\xT_{\phi,ij} = -\ketbra{1}{0}$, respectively.

\paragraph{Walk and cycle matrices.}
We take the closure of the transfer matrices, under composition along walks in the graph.
For any walk $W=(v_1,v_2,\ldots v_k)$ in a graph $G=(V,E)$, multiplying the transfer matrices $\xT_{v_{k\text{--}1}v_{k}}\cdots\xT_{v_2v_3}\xT_{v_1v_2}$ yields a new transfer matrix $\xT_{W}$, which we call the \emph{walk matrix}{ of $W$ (or \emph{path matrix}, if $W$ is a path).
For such a walk $W$, define $W\rev := (v_k,v_{k-1},\ldots,v_2,v_1)$. If a transfer matrix $\xT_W$ has singular value decomposition $\xT_W = s_0 \ketbra{\ell_0}{r_0} + s_1 \ketbra{\ell_1}{r_1}$, one may show by induction on the length of $W$ that
\begin{equation}
	\xT_{W\rev} = \pm \bigl(s_0 \ketbra{r_1}{\ell_1} + s_1 \ketbra{r_0}{\ell_0}\bigr),
\end{equation}
where the sign depends on whether $W$ has odd or even length.
In particular, this implies that
$ 
    \xT_W \xT_{W\rev} = \pm s_0 s_1 I.
$ 
Thus $\xT_W \xT_{W\rev} \propto^\ast I$ for all walks $W$, with a proportionality factor of zero if and only if $\xT_W$ is singular.
In particular, walk operators can sometimes be composed to represent ``cancellation'' of edges: For walks $U_1 = W' W$ and $U_2 = W\rev W''$, if $\xT_W$ is invertible, we have
$
	\xT_{W' W''} \,\propto\, \xT_{W''} \xT_{W\rev} \xT_W \xT_{W'} \,=\, \xT_{U_1 U_2},
$
representing a form of composition of walks in which repeated edges $(ij)(ji)$ cancel.

For $C=(v,u_1,u_2,\ldots,u_k,v)$ a cycle in $G$, the \emph{cycle matrix of $C$ at $v$} is just the walk operator $\xT_C$ arising from the walk from $v$ to itself along $C$.
We consider the cycles $C$ and (e.g.) $C'=(u_1,u_2,\ldots,u_k,v,u_1)$ to be distinct as walks; in particular, $C$ and $C'$ may give rise to distinct cycle matrices $\xT_{C'} \not\propto \xT_C$, which in any case represent operators on the state-spaces of distinct qubits.

Walk operators (and cycle operators in particular) allow us to more easily express express long-range constraints implicit in the original projectors $\Pi_{ij}$ (as one may show by induction):
\begin{lemma}[Inconsistency Lemma]
	\label{l:inconsistency}
	Let $W = (v, v_1, v_2, \ldots, v_\ell, w)$ be a walk in $G'$ with walk operator $\xT_W$, and let $\ket{\Psi} \in \cH^{\otimes n}$ be a product of single-qubit states $\ket{\psi_v}$ for each $v \in [n]$.
	If $\ket{\psi_w} \not\propto^\ast \xT_W \ket{\psi_v}$, then at least one constraint $\Pi_{ij}$ corresponding to an edge in $W$ is not satisfied by $\ket{\Psi}$.
\end{lemma}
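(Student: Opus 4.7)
The plan is to prove the contrapositive by induction on the length $\ell+1$ of the walk $W$: assuming that every constraint $\Pi_{ij}$ corresponding to an edge of $W$ is satisfied by the product state $\ket{\Psi}$, I will show that $\ket{\psi_w} \propto^\ast \xT_W \ket{\psi_v}$.

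For the base case $\ell = 0$, the walk $W = (v,w)$ consists of a single edge, so $\xT_W = \xT_{vw}$, and the conclusion is exactly the characterization of the transfer matrix recalled immediately after equation~(\ref{eqn:transfer}): a product state $\ket{\psi_v}\ket{\psi_w}$ satisfies $\Pi_{vw}$ precisely when $\ket{\psi_w} \propto^\ast \xT_{vw}\ket{\psi_v}$. (In particular, when $\xT_{vw}\ket{\psi_v} = 0$, the right-hand side is $0$, which is linearly dependent with every vector, matching the observation in Section~\ref{sscn:reddefs} that $j$ is then unconstrained.)

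For the inductive step, split $W = (v, v_1, \ldots, v_\ell, w)$ into the sub-walk $W' = (v, v_1, \ldots, v_\ell)$ and the trailing edge $(v_\ell, w)$. Since every edge of $W'$ is an edge of $W$, the inductive hypothesis yields $\ket{\psi_{v_\ell}} \propto^\ast \xT_{W'}\ket{\psi_v}$, and the base case applied to the trailing edge gives $\ket{\psi_w} \propto^\ast \xT_{v_\ell w}\ket{\psi_{v_\ell}}$. Composing these two relations via the linearity of $\xT_{v_\ell w}$ and recalling $\xT_W = \xT_{v_\ell w} \xT_{W'}$ yields the desired $\ket{\psi_w} \propto^\ast \xT_W \ket{\psi_v}$.

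The only mildly delicate point (and the one I would write out most carefully) is the bookkeeping of the $\propto^\ast$ relation through the composition, since transfer matrices can be singular. If $\xT_{W'}\ket{\psi_v} \ne 0$, then $\ket{\psi_{v_\ell}} = c\,\xT_{W'}\ket{\psi_v}$ for some scalar $c$, which must be nonzero since $\ket{\psi_{v_\ell}}$ is a bona fide single-qubit assignment; applying $\xT_{v_\ell w}$ preserves this scalar and the conclusion follows from transitivity of linear dependence. If instead $\xT_{W'}\ket{\psi_v} = 0$, then $\xT_W \ket{\psi_v} = \xT_{v_\ell w}\cdot 0 = 0$, and the relation $\ket{\psi_w} \propto^\ast 0$ holds vacuously. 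No other case arises, so the induction closes.
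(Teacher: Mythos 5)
Your proof is correct and is precisely the induction the paper has in mind: the paper states this lemma with only the remark that it follows ``by induction,'' and your contrapositive argument --- base case from the characterization of a single transfer matrix, inductive step by peeling off the last edge and tracking the two cases $\xT_{W'}\ket{\psi_v} \ne 0$ and $\xT_{W'}\ket{\psi_v} = 0$ --- fills in exactly that omitted argument. Your care with the $\propto^\ast$ bookkeeping (reading it as linear dependence so that the singular case is handled) is the right way to make the induction close.
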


\section{Efficient reductions via trial assignments in \tq}\label{scn:reductions}

As outlined in Section~\ref{scn:preliminaries}, we consider rank-1 instances of \tq\ which either have a product solution or are unsatisfiable.
In this section, we describe a means to incorporate transfer matrices into an efficient algorithm for \tq\ via the notion of a \emph{chain reaction}: An EIS-style subroutine for trial assignments.

As in Section~\ref{scn:intro}, we define the \emph{implication graph} of a $2$-QSAT instance to be an (infinite) directed graph $G=(V,E)$, where $V$ is the set of pairs $(i,\ket{\psi})$ for qubits $i$ and (distinct) states $\ket{\psi} \in \cH$.
There is a directed edge $(i,\ket{\psi}) \to (j,\ket{\phi})$ if and only if there is a constraint $\Pi_{ij}$ with transfer matrix $\xT_{ij}$ such that $\xT_{ij}\ket{\psi}\propto\ket{\phi}$. A ``chain reaction'' is a depth-first exploration of the nodes of $G$:

\begin{definition}[Chain reaction (CR)]
\label{def:CR}
For a qubit $i$ and state $\ket{\psi_i}\in\cH$, to \emph{induce a chain reaction (CR) at $i$ with $\ket{\psi_i}$} means to ``partially traverse'' $G$, starting from $(i,\ket{\psi_i})$ and keeping a record of the vertices $(u,\ket{\psi_u})$ seen for each $u$.
{This traversal is governed by a depth-first search (DFS) in the interaction graph $G'$, as follows.
For each vertex $(u,\ket{\psi_u})$ visited and each edge $\{u,v\}$ in $G'$, compute $\xT_{uv} \ket{\psi_u}$.
If this vector is non-zero, let $\ket{\psi_v} := \xT_{uv} \ket{\psi_u}$, and traverse to $(v,\ket{\psi_v})$ in $G$.
For any vertex $(v,\ket{\psi_v})$ visited by the CR, we say that the CR \emph{assigns $\ket{\psi_v}$ to $v$}.
In the sequence of vertices in $G$ visited by the CR, we may refer to instances of vertices $(v,\ket{\psi})$ for a given $v \in V$ as the \emph{first assignment}, the \emph{second assignment}, \etc\ made to $v$ by the CR.}
\end{definition}
\noindent Edges of $G'$ (and walks in $G'$) which are traversed by the depth-first search (DFS) governing a chain reaction, are also said to be traversed by the chain reaction (CR) itself.

The role of CRs in our analysis is to reveal constraints imposed by transfer matrices in an efficient manner.
Specifically, if the DFS in $G'$ which governs the CR encounters a cycle, it will visit a vertex $v$ in $G'$ twice, and so makes ``assignments'' to $v$ more than once.
If these assignments do not match, we say the CR has a \emph{conflict}. If {no such conflicts occur}, the CR is called \emph{conflict-free}.
(In either case, it does not continue the traversal of the CR from {the second, third, \etc\ assignments.})
{We formalise the intuitive significance of conflicts as follows:
\begin{lemma}[Conflict Lemma]\label{l:CRconflict}
	If a CR induced at $v$ with $\ket{\psi_v} \in \cH$ has a conflict, then no product state $\ket{\Psi} \in \cH^{\otimes n}$ for which the state of $v$ is $\ket{\psi_v}$ is a satisfying assignment.
\end{lemma}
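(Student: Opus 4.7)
The plan is to derive a contradiction from the contrapositive of the Inconsistency Lemma applied to two distinct walks in $G'$ that are witnessed by the conflict. The argument is essentially that the CR, viewed as a DFS in $G'$, records for each visited qubit $w$ a tree walk from $v$ to $w$ and uses the associated walk operator to propagate $\ket{\psi_v}$ forward; a conflict says two such ``propagations'' yield non-proportional non-zero vectors, which the Inconsistency Lemma rules out for any product state assigning $\ket{\psi_v}$ to $v$.

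In more detail, suppose the DFS first visits $w$ via the tree walk $W_1=(v=v_0, v_1, \ldots, v_\ell=w)$, producing the first assignment $\ket{\psi_w^{(1)}} \propto \xT_{W_1}\ket{\psi_v}$, which is non-zero by construction of the CR (each edge along $W_1$ is traversed precisely because the corresponding transfer matrix sent the current state to something non-zero). A conflict means that later, while at some vertex $u$ reached by a tree walk $W$ from $v$, the DFS examines the edge $\{u,w\}$ and computes a non-zero vector $\ket{\psi_w^{(2)}} \propto \xT_{uw}\xT_W\ket{\psi_v}=\xT_{W_2}\ket{\psi_v}$, where $W_2$ is the walk $W$ extended by the edge $(u,w)$, and moreover $\ket{\psi_w^{(1)}}\not\propto \ket{\psi_w^{(2)}}$.

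Now assume for contradiction that some product state $\ket{\Psi}=\bigotimes_{j} \ket{\psi_j}$ with $\ket{\psi_v}$ as its $v$-marginal satisfies every projector $\Pi_{ij}$. Apply the Inconsistency Lemma to $W_1$: since every constraint along $W_1$ is satisfied by $\ket{\Psi}$, the contrapositive gives $\ket{\psi_w} \propto^\ast \xT_{W_1}\ket{\psi_v} \propto \ket{\psi_w^{(1)}}$. The same reasoning with $W_2$ gives $\ket{\psi_w}\propto^\ast \xT_{W_2}\ket{\psi_v}\propto\ket{\psi_w^{(2)}}$. Because $\ket{\psi_w}$ is a genuine (non-zero) single-qubit state of $\ket{\Psi}$, both proportionality scalars are non-zero, so $\ket{\psi_w^{(1)}}\propto\ket{\psi_w}\propto\ket{\psi_w^{(2)}}$, contradicting the conflict.

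The only non-routine point is to nail down that the CR, as described in Definition~\ref{def:CR}, really does associate each assignment $\ket{\psi_w^{(k)}}$ to a walk $W_k$ in $G'$ from $v$ to $w$ along which every intermediate transfer matrix application is non-zero, so that the walk operator $\xT_{W_k}$ satisfies $\xT_{W_k}\ket{\psi_v}\propto\ket{\psi_w^{(k)}}$ with a non-zero proportionality factor; this follows by an easy induction on DFS depth, and it is where care is needed because the Inconsistency Lemma's conclusion uses $\propto^\ast$ and we must upgrade to $\propto$ to obtain the contradiction. Once this is in hand, the rest of the argument is just the two applications of the Inconsistency Lemma described above.
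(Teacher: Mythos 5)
Your proposal is correct and follows essentially the same route as the paper: identify the two walks $W_1, W_2$ from $v$ to $w$ witnessed by the conflict, and apply the Inconsistency Lemma to each to rule out any satisfying product state with $v$-marginal $\ket{\psi_v}$. You merely spell out in more detail the step the paper leaves implicit, namely that since both CR assignments are non-zero, the $\propto^\ast$ conclusions upgrade to $\propto$ and force $\xT_{W_1}\ket{\psi_v} \propto \xT_{W_2}\ket{\psi_v}$, contradicting the conflict.
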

\begin{proof}
  A conflict in the CR indicates the presence of two walks $W_1$ and $W_2$ in the interaction graph $G'$, from $v$ to some vertex $w$, for which $\xT_{W_1} \ket{\psi_v} \not\propto^\ast \xT_{W_2} \ket{\psi_v}$.
  It follows from the Inconsistency Lemma (Lemma~\ref{l:inconsistency}) that any product state in which $v$ takes the state $\ket{\psi_v}$ is not satisfying.
\end{proof}}
	
With the concept of CRs in hand, we can present the key ideas used by our algorithm.
First, conflict-free CRs yield partial assignments, which preserve the satisfiability of the instance defined on the remaining unassigned qubits.
Second, if a \tq\ instance is satisfiable, then a conflict-free CR can be found efficiently.
Our algorithm (presented in Figure~\ref{fig:algorithmq}) essentially operates by repeatedly finding conflict-free CRs, and removing the qubits given assignments by each CR, until either a conflict is detected (in which case we reject), or no unassigned qubits remain (in which case we accept).

\subsection{Using conflict-free chain reactions to remove qubits}\label{sscn:CR1}

The main result in this Section is Theorem~\ref{thm:setandforget} (Set-and-Forget Theorem), {which is essentially the converse of Lemma~\ref{l:CRconflict}, and allows us to reduce instances of \tq\ by providing partial solutions obtained from a CR induced on a single qubit.}

We begin by proving a correspondence between CRs and walk operators, in the sense that if there is a walk $W = (v, v_1, v_2, \ldots, w)$ in $G'$, and if $\ket{\psi_v} \notin \ker(\xT_W)$, a CR induced at $v$ with a state $\ket{\psi_v}$ should assign $\xT_W \ket{\psi_v}$ to $w$.
The obstacle here is that the CR might not traverse any of the edges of $W$ before assigning a state to $w${; we must then} relate $W$ to other walks in $G'$.
We do so as follows.
\begin{lemma}[Unique Assignment Lemma]\label{l:uniqueassignment}
    Suppose there exists a state $\ket{\psi}$ and a walk $W$ in $G'$ from $v$ to $w$ such that $\xT_{W}\ket{\psi}\propto\ket{\phi}$. Then, for any conflict-free CR induced on $v$ with $\ket{\psi}$, $w$ is assigned $\ket{\phi}$.
\end{lemma}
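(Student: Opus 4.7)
My plan is to prove the Unique Assignment Lemma by induction on the length $k$ of the walk $W = (v = v_0, v_1, \ldots, v_k = w)$. Writing $\ket{\psi_j} := \xT_{v_{j-1}v_j} \cdots \xT_{v_0 v_1}\ket{\psi}$, the hypothesis says $\ket{\psi_k} \propto \ket{\phi} \ne 0$, which forces $\ket{\psi_j} \ne 0$ for every $0 \le j \le k$ (otherwise $\ket{\psi_k}$ would vanish). The base case $k=0$ is immediate, since then $v=w$, $\xT_W = I$, and the CR assigns $\ket{\psi}\propto\ket{\phi}$ to $v$ by its very definition.

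For the inductive step, let $W' = (v_0, \ldots, v_{k-1})$, a walk of length $k-1$ with $\xT_{W'}\ket{\psi}\propto\ket{\psi_{k-1}} \ne 0$. Applying the inductive hypothesis to $W'$, the CR assigns $\ket{\psi_{k-1}}$ to $v_{k-1}$. Now I examine the moment in the DFS governing the CR at which $v_{k-1}$ is first visited: by Definition~\ref{def:CR} the CR then computes $\xT_{v_{k-1} w}\ket{\psi_{k-1}} = \ket{\psi_k}$ for the edge $\{v_{k-1},w\} \in G'$. Since $\ket{\psi_k} \ne 0$, the CR does not skip this edge, and one of two things happens: either $w$ has not yet been assigned, in which case the CR now assigns $\ket{\psi_k} \propto \ket{\phi}$ to $w$; or $w$ was already assigned some state $\ket{\phi'}$ during an earlier branch of the DFS, in which case conflict-freeness demands $\xT_{v_{k-1}w}\ket{\psi_{k-1}} \propto^\ast \ket{\phi'}$, so $\ket{\phi'} \propto \ket{\psi_k} \propto \ket{\phi}$ (the $\propto^\ast$ collapses to $\propto$ because the left-hand side is nonzero). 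In both cases $w$ is assigned $\ket{\phi}$, closing the induction.

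The step I expect to be the main subtlety is dealing with the DFS bookkeeping: the CR need not traverse the edges of $W$ in the order in which they appear, so it is tempting (but incorrect) to try to trace out the walk step-by-step. The induction sidesteps this by using the hypothesis on $W'$ only to guarantee what $v_{k-1}$ is ultimately assigned, and then leveraging conflict-freeness (\emph{via} Lemma~\ref{l:CRconflict}'s underlying reasoning) to force agreement when $v_{k-1}$ later examines the edge to $w$, regardless of whatever parallel walks the DFS may have explored. A small bookkeeping point worth stating carefully is the convention that ``$w$ is assigned $\ket{\phi}$'' is an equality of states only up to a nonzero scalar, so that the conclusion $\ket{\phi'} \propto \ket{\phi}$ suffices.
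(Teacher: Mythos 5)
Your proof is correct, and it takes a genuinely different route from the paper's. You induct on the length of $W$, using the inductive hypothesis on the prefix $W' = (v_0, \ldots, v_{k-1})$ to pin down the assignment at $v_{k-1}$, and then handle the final edge $\{v_{k-1}, w\}$ directly from the definition of a chain reaction together with conflict-freeness. The paper instead argues by deforming the walk: it takes the longest prefix $W_m$ of $W$ that the CR actually traverses, observes that the failure to traverse the next edge $(u_m, u_{m-1})$ can only be due to a prior assignment at $u_{m-1}$ (since $\xT_W \ket{\psi} \ne 0$ rules out the transfer matrix annihilating the state), and replaces the prefix of $W$ up to $u_{m-1}$ by the walk the CR actually followed to reach $u_{m-1}$; iterating this produces a walk $\tilde W$ that is fully followed by the CR and satisfies $\xT_{\tilde W} \ket{\psi} \propto \ket{\phi}$. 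Both arguments ultimately rest on the same two facts --- every edge out of a first-visited vertex is examined, and conflict-freeness forces all assignments to a given vertex to agree up to a nonzero scalar --- but your induction dispenses with the walk-rewriting bookkeeping and is the more economical of the two. What the paper's version buys is the slightly stronger intermediate conclusion that some CR-traversed walk realizes the same assignment as $W$, though nothing downstream appears to require that extra strength. One small point you handle correctly but should keep explicit: with parallel edges permitted between a pair of qubits, the edge of $G'$ examined in your last step must be the specific constraint appearing in $W$, which is fine since the CR examines every edge incident to a first-visited vertex.
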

\begin{proof}	
	We show that there is a walk $\tilde W$ in $G'$ which is followed by the CR, for which $\xT_{\tilde W} \ket{\psi} \propto \ket{\phi}$.
  Suppose $W = (v, u_\ell, \ldots, u_1, u_0)$ for $u_0 := w$.
  For each $i \ge 0$, let $W_i$ denote the segment $(v, u_\ell, \ldots, u_i)$ of the walk $W$.
	Let $m$ be the smallest integer such that the CR traverses $W_m$.
	If $m = 0$, then we may take $\tilde W = W$ is the walk followed by the CR from $v$ to $w$.
	Otherwise, we show a reduction to ``deform'' $W$, to obtain walks $W'$, $W''$, \ldots, and a decreasing sequence $m > m' > m'' > \cdots$, for which the CR follows the walks $W_m$, $W'_{m'}$, $W''_{m''}$, \etc.
	These walks have successively shorter ``tails'' of edges which are not followed by the CR: the final such walk $\tilde W$ is then one which is completely followed by the CR.
	

	Given that $m > 0$, let $\ket{\psi_m} = \xT_{W_m} \ket{\psi}$.
	By hypothesis, the CR does not traverse the edge $(u_m, u_{m{-}1})$, either {because $\xT_{u_m u_{m\text{--}1}} \ket{\psi_m} = 0$, or because of} an assignment on $u_{m{-}1}$.
	The former implies $\xT_W \ket{\psi} = 0 \not\propto \ket{\phi}$, contrary to hypothesis.
	Then there is a walk $W'_{m{-}1} = (v, u'_r, \cdots, u'_m, u_{m{-}1})$ in $G'$, which is followed by the CR to make the assignment to $u_{m{-}1}$. {(Note that the assignments to $u_{m-1}$ made by both $W$ and $W'_{m-1}$ are proportional to one another, as otherwise the CR would have detected a conflict when attempting to traverse edge $(u_m,u_{m-1})$ during its breadth-first search.)}
	We extend the walk $W'_{m{-}1}$ to a walk $W' = (v, u'_r, \ldots, u'_m, u_{m{-}1}, \ldots, u_1, w)$.
	{The CR has traversed $W'$} at least as far as the vertex $u_{m{-}1}$, missing out fewer edges at the end than it does for $W$.
	Furthermore, as the CR is conflict-free, we have $\xT_{u_1 w}\xT_{u_2 u_1}\cdots \xT_{u_m u_{m\text{--}1}} \ket{\psi_m} \propto \xT_{W'} \ket{\psi}$, so that $\ket{\phi} \propto \xT_W \ket{\psi} \propto \xT_{W'} \ket{\psi}$ by construction.
	
	Repeating the reduction above yields a walk $\tilde W$ in $G'$ which is completely followed by the CR, for which $\xT_{\tilde W} \ket{\psi} \propto \ket{\phi}$ by induction.
	Then $\ket{\phi}$ is the assignment made to $w$ by the CR.
\end{proof}
\noindent
Note that the above result holds regardless of which walk $W$ we consider from $v$ to $w$, so long as $\xT_W \ket{\psi} \ne 0$.
Thus a conflict-free CR induced at $v$ depends on a consistency between all walk operators, from $v$ to any other given $w$, relative to the initial assignment $\ket{\psi_v}$.
For the case $w = v$, we then have:
\begin{lemma}[{Circuit Lemma}]\label{l:circuit}
    Let $W$ be a closed walk starting and ending at $v$.
    If $\ket{\psi_v}$ is not an eigenvector of $\xT_W$, then inducing a CR at $v$ with $\ket{\psi_v}$ yields a conflict.
\end{lemma}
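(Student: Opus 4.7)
The plan is to argue by contrapositive, combining the Unique Assignment Lemma applied to the closed walk $W$ with the observation that a conflict-free CR must agree with the initial assignment made at $v$.

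First I would observe the following small fact: if $\ket{\psi_v}$ is not an eigenvector of $\xT_W$ in the usual sense (\ie,~there is no $\lambda \in \C$ with $\xT_W\ket{\psi_v} = \lambda \ket{\psi_v}$), then $\xT_W \ket{\psi_v} \ne 0$ (since otherwise $\ket{\psi_v}$ would be an eigenvector with eigenvalue $0$), and moreover $\xT_W \ket{\psi_v} \not\propto \ket{\psi_v}$ and $\xT_W \ket{\psi_v} \not\propto^\ast \ket{\psi_v}$.

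Next, I would apply the Unique Assignment Lemma (Lemma~\ref{l:uniqueassignment}) with the closed walk $W$ from $v$ to itself, taking $\ket{\phi} := \xT_W \ket{\psi_v}$. Since $\ket{\phi} \ne 0$, we have $\xT_W \ket{\psi_v} \propto \ket{\phi}$ in the strict (non-zero) sense required by that lemma. Assuming for contradiction that the CR induced at $v$ with $\ket{\psi_v}$ is conflict-free, the lemma guarantees that at some point the CR assigns $\ket{\phi} = \xT_W \ket{\psi_v}$ to $v$.

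Finally, I would compare this with the initial assignment: the CR began by assigning $\ket{\psi_v}$ to $v$, so a conflict-free CR requires all subsequent assignments to $v$ to be proportional to $\ket{\psi_v}$. Combined with the Unique Assignment step, this forces $\xT_W \ket{\psi_v} \propto \ket{\psi_v}$, contradicting the assumption that $\ket{\psi_v}$ is not an eigenvector of $\xT_W$. Hence the CR must produce a conflict.

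The only mildly subtle point, which I do not anticipate being a genuine obstacle, is making sure the ``eigenvector'' hypothesis subsumes the kernel case so that the Unique Assignment Lemma applies (it requires $\xT_W\ket{\psi_v} \ne 0$, which is precisely what fails when $\ket{\psi_v} \in \ker(\xT_W)$, a possibility excluded by the eigenvector hypothesis with eigenvalue~$0$). Everything else is a direct invocation of earlier lemmas.
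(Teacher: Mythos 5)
Your proposal is correct and follows essentially the same route as the paper: both arguments reduce the claim to the Unique Assignment Lemma applied to the closed walk $W$ (with the kernel case $\xT_W\ket{\psi_v}=0$ absorbed into the eigenvector hypothesis as eigenvalue $0$), and then compare the resulting assignment at $v$ with the initial assignment $\ket{\psi_v}$. The paper phrases this as a contrapositive while you phrase it as a contradiction, but the content is identical.
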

\begin{proof}
	By definition, the CR assigns $\ket{\psi_v}$ to $v$.
  If the CR is conflict-free, then either $\xT_W \ket{\psi_v} = 0$ or $\xT_W \ket{\psi_v} \propto \ket{\psi_v}$, by Unique Assignment (Lemma~\ref{l:uniqueassignment}).
  Thus, if $\ket{\psi_v}$ is not an eigenvector of $\xT_W$, such a CR will have a conflict.
\end{proof}

\noindent
Lemma~\ref{l:uniqueassignment} also allows us to decouple the set of vertices given assignments by a CR, from the rest:

\begin{lemma}[Unilateral Lemma]\label{l:unilateral}
    For any state $\ket{\psi}$ and vertex $v$, suppose that a CR $C_1$ induced at $v$ with $\ket{\psi}$ is conflict-free. Let $A$ denote the set of vertices given an assignment by $C_1$, and $\ket{\psi_a}$ denote the assignment made by $C$ at a given $a \in A$.
    Then, for any constraint $\Pi_{ab}$ for $a \in A$ and $b \in V \setminus A$ and for any $\ket{\phi} \in \cH$, $\Pi_{ab} \bigl(\ket{\psi_a} \ox \ket{\phi}\bigr) = 0$.
\end{lemma}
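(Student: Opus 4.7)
The plan is to derive the conclusion in two steps: first, show that $b \in V\setminus A$ together with $a \in A$ forces $\xT_{ab}\ket{\psi_a} = 0$; second, use the structure of rank-$1$ product constraints to conclude that this vanishing is exactly what makes $\Pi_{ab}(\ket{\psi_a}\ox\ket{\phi}) = 0$ for every $\ket{\phi}$.

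For the first step, I appeal directly to Definition~\ref{def:CR}. When the DFS governing $C_1$ processes $a$ (which it does, since $a \in A$), it examines every edge $\{a,b\}$ of $G'$ incident to $a$ and computes $\xT_{ab}\ket{\psi_a}$. If this vector were non-zero, the CR would either add $(b,\xT_{ab}\ket{\psi_a})$ to its record (so that $b \in A$) or detect a conflict with a previously recorded assignment on $b$ (violating conflict-freeness). Since $b \notin A$ and $C_1$ is conflict-free, neither outcome is possible, and hence $\xT_{ab}\ket{\psi_a}=0$.

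For the second step, I use that by the formulas in equation~(\ref{eqn:transfer}), the transfer matrix $\xT_{ab}$ is non-singular whenever $\ket{\phi_{ab}}$ (where $\Pi_{ab} = \ketbra{\phi_{ab}}{\phi_{ab}}$) has Schmidt rank $2$. So the conclusion $\xT_{ab}\ket{\psi_a}=0$ from the first step forces $\ket{\phi_{ab}}$ to be a product state $\ket{\alpha}_a\ox\ket{\beta}_b$, in which case $\xT_{ab} \propto \ketbra{\beta^\perp}{\alpha}$, and so $\xT_{ab}\ket{\psi_a}=0$ is equivalent to $\langle\alpha|\psi_a\rangle = 0$. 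It follows that for every $\ket{\phi}\in\cH$,
\[
    \Pi_{ab}\bigl(\ket{\psi_a}\ox\ket{\phi}\bigr) \;\propto\; \ket{\alpha}\ket{\beta}\,\langle\alpha|\psi_a\rangle\,\langle\beta|\phi\rangle \;=\; 0,
\]
completing the argument.

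I expect the only subtlety to be in the first step: namely, making sure that the DFS really does examine the edge $\{a,b\}$ from the $a$-side with the assignment $\ket{\psi_a}$, rather than silently skipping it. This follows from the convention in Definition~\ref{def:CR} that every edge incident to a visited vertex is considered, so no special argument is required. The algebraic second step is then routine, and in fact mirrors the intuition already stated in Section~\ref{sscn:reddefs} that ``$\xT_{ij}\ket{\psi_i} = 0$ means $\Pi_{ij}$ is satisfied for any assignment on $j$''.
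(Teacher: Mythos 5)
Your proof is correct, and it reaches the same pivotal fact as the paper --- that $b \notin A$ forces $\xT_{ab}\ket{\psi_a} = 0$ --- but by a slightly different route. The paper takes the walk $W$ by which the CR reached $a$, extends it by the edge $(a,b)$, and invokes the Unique Assignment Lemma (Lemma~\ref{l:uniqueassignment}) to conclude that $\xT_{ab}\xT_W\ket{\psi} = 0$; you instead argue directly from Definition~\ref{def:CR} that the DFS examines the edge $\{a,b\}$ when it processes $a$'s first assignment, so a non-zero $\xT_{ab}\ket{\psi_a}$ would have placed $b$ in $A$. Your version is more elementary and self-contained (it does not lean on the earlier lemma), at the cost of having to be careful that the lemma's $\ket{\psi_a}$ agrees, up to proportionality, with the first assignment from which the DFS actually explores --- which conflict-freeness guarantees, so no harm done. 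For the final step, the paper simply asserts that $\xT_{ab}\ket{\psi_a}=0$ implies $\Pi_{ab}(\ket{\psi_a}\ox\ket{\phi})=0$, relying on the discussion in Section~\ref{sscn:reddefs}; you spell this out by first deducing that $\Pi_{ab}$ must be a product constraint (since $\xT_{ab}$ annihilates the non-zero vector $\ket{\psi_a}$ and is therefore singular) and then computing $\Pi_{ab}(\ket{\psi_a}\ox\ket{\phi}) \propto \langle\alpha|\psi_a\rangle\langle\beta|\phi\rangle\,\ket{\alpha}\ket{\beta} = 0$ explicitly. Both routes are sound; yours makes the mechanism more transparent, while the paper's is shorter because it reuses machinery already established.
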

\begin{proof}
		For $a \in A$, the CR $C_1$ must discover a walk {$W = (v, v_1, v_2, \ldots, v_\ell)$ for $v_\ell := a$, such that for any sub-walk $W_i = (v, v_1, \ldots, v_i)$ for $1 \le i \le \ell$, we have $\xT_{W_i} \ket{\psi} \ne 0$.}
		The assignment made to $a$ by $C_1$ is then $\ket{\psi_a} := \xT_{W} \ket{\psi}$ by construction.
		Conversely, as $b \notin A$, it follows by the Unique Assignment (Lemma~\ref{l:uniqueassignment}) that all walks $W_\ast$ in $G'$ from $v$ to $w$ satisfy $\xT_{W_\ast} \ket{\psi} = 0$: this holds in particular for the walk $W' = (v, v_1, \ldots, a, b)$.
		Then $\xT_{ab} \ket{\psi_a} = 0$, which is to say that $\Pi_{ab} \bigl( \ket{\psi_a} \ox \ket{\phi} \bigr) = 0$ for all $\ket{\phi}$.
\end{proof}

\noindent
The Unilateral Lemma allows us to treat conflict-free CRs as ``set-and-forget'' subroutines, in which we establish partial assignments on a set of qubits which we may remove from an instance $\mathscr P = \{\Pi_{ij}\}_{ij \in E}$ of \tq, obtaining a simpler, equivalent instance $\mathscr P' \subset \mathscr P$. Formally, we have the following.

\begin{theorem}[Set-and-Forget Theorem]\label{thm:setandforget}
    Let $\mathscr P = \{ \Pi_{ij} \}_{ij \in E}$ be an instance of \tq\ with interaction graph $G'=(V,E)$. Suppose that $C$ is a conflict-free CR induced at $v\in V$ with $\ket{\psi_v}\in\cH$, and let $A$ denote the set of vertices given assignments by $C$. Let $\mathscr P'$ be a \tq\ instance obtained from $\mathscr P$ by removing all constraints acting on $A$. Then $\mathscr P$ is satisfiable by product states if and only if $\mathscr P'$ is.
\end{theorem}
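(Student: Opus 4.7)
The forward direction is immediate: since $\mathscr P' \subset \mathscr P$, any product state satisfying $\mathscr P$ also satisfies $\mathscr P'$. I would spend the effort on the converse. Given a product state $\ket{\Psi'} = \bigotimes_{u \notin A} \ket{\psi'_u}$ satisfying $\mathscr P'$, I would extend it to a full product state $\ket{\Psi} \in \cH^{\otimes n}$ by using the CR's assignment $\ket{\psi_a}$ for each $a \in A$, and keeping the state $\ket{\psi'_u}$ from $\ket{\Psi'}$ for each $u \notin A$. The task then reduces to verifying $\Pi_{ij} \ket{\Psi} = 0$ for every constraint in $\mathscr P$, which I would organize by splitting the constraints into three types according to how their edge $ij$ intersects $A$.

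Constraints with both endpoints outside $A$ are by definition in $\mathscr P'$ and so are satisfied by $\ket{\Psi}$. Constraints with exactly one endpoint in $A$ fall immediately to the Unilateral Lemma (Lemma~\ref{l:unilateral}), which says that $\Pi_{ab}(\ket{\psi_a} \otimes \ket{\phi}) = 0$ for any $\ket{\phi} \in \cH$ whenever $a \in A$ and $b \notin A$; in particular this handles the choice $\ket{\phi} = \ket{\psi'_b}$.

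The main obstacle, and the case where I expect to need real care, is an edge $(a,b)$ with both $a,b \in A$, since the CR may never have traversed this edge even though both endpoints received assignments via other walks. Here my plan is to invoke the Unique Assignment Lemma (Lemma~\ref{l:uniqueassignment}) as follows. Because $a \in A$, the CR follows some walk $W_a$ from $v$ to $a$ for which $\xT_{W_a}\ket{\psi_v}$ is a nonzero scalar multiple of $\ket{\psi_a}$. Appending the edge $(a,b)$ produces a walk $W_b := W_a \cdot (a,b)$ from $v$ to $b$ with walk operator $\xT_{ab}\xT_{W_a}$. If $\xT_{ab}\ket{\psi_a} \neq 0$, then Unique Assignment applied to $W_b$ forces $\ket{\psi_b} \propto \xT_{ab}\ket{\psi_a}$, which is exactly the condition that $\Pi_{ab}(\ket{\psi_a}\otimes\ket{\psi_b}) = 0$. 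Otherwise $\xT_{ab}\ket{\psi_a} = 0$, which by inspection of the transfer matrix formula (Eq.~\ref{eqn:transfer}) can happen only when $\Pi_{ab}$ has Schmidt rank $1$, and a short calculation then yields $\Pi_{ab}(\ket{\psi_a}\otimes\ket{\phi}) = 0$ for every $\ket{\phi}$. Combining the three cases establishes that $\ket{\Psi}$ satisfies $\mathscr P$, completing the converse. The only subtlety is the possibility that the CR ignored edges inside $A$, which is precisely the reason the Unique Assignment Lemma is needed to certify consistency of the assignments along such edges.
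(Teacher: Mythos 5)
Your proof is correct and follows essentially the same route as the paper's: extend the $\mathscr P'$-solution by the CR's assignments on $A$, dispatch the $A$-to-$(V\setminus A)$ constraints with the Unilateral Lemma, and handle the constraints internal to $V\setminus A$ trivially. The only difference is that where the paper asserts that constraints internal to $A$ are satisfied ``by construction'', you justify this explicitly via the Unique Assignment Lemma applied to the walk $W_a \cdot (a,b)$, including the singular case $\xT_{ab}\ket{\psi_a}=0$ --- a detail the paper glosses over, and which you handle correctly.
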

\begin{proof}
		For a given $a \in A$, let $\ket{\psi_a}$ denote the assignment made by $C$ to $a$.
		By construction, the states $\ket{\psi_a}$ jointly satisfy all constraints between vertices in $a$; and by the Unilateral Lemma (Lemma~\ref{l:unilateral}), the states $\ket{\psi_a}$ also unilaterally satisfy constraints between vertices in $A$ and vertices in $V \setminus A$.
		If $\mathscr P'$ is satisfiable by a state $\ket{\Phi} = \bigotimes_{v \in V\setminus A} \ket{\phi_a}$, then $\mathscr P$ is satisfiable by $\ket{\Psi} = \bigl[\bigotimes_{a \in A} \ket{\psi_a}\bigr] \ox \ket{\Phi}$.
		For the converse, suppose that $\mathscr P$ is satisfiable by some state $\ket{\Psi'} = \bigotimes_{v \in V} \ket{\psi'_v}$ (which may not agree with the assignments made by $C$).
		Define $\ket{\Psi} = \bigl[\bigotimes_{a \in A} \ket{\psi_a}\bigr] \ox \bigl[\bigotimes_{v \in V \setminus A} \ket{\psi'_v} \bigr]$.
		Again, $\ket{\Psi}$ satisfies all constraints acting on vertices $a \in A$, and by construction it also satisfies all constraints internal to $V \setminus A$.
		Then $\ket{\Psi}$ also satisfies $\mathscr P$, and its restriction to $V \setminus A$ satisfies $\mathscr P'$.
\end{proof}

\subsection{How to find conflict-free chain reactions efficiently}\label{sscn:discrete}

The Set-and-Forget Theorem (Theorem~\ref{thm:setandforget}) provides us with the following approach to find a product assignment for an instance $\mathscr P$ of \tq: \mbox{(\textit{i})}~pick an unassigned vertex $v$, \mbox{(\textit{ii})}~find $\ket{\psi_v}$ such that the CR induced at $v$ with $\ket{\psi_v}$ is conflict-free, and \mbox{(\textit{iii})}~use this CR to produce a partial assignment, reducing to an instance $\mathscr P'$ with fewer qubits.
It remains to attempt to find such a state $\ket{\psi_v}$, or determine that none exist, from the continuum $\cH$ of single-qubit states.

As we describe in Section~\ref{scn:intro}, and as shown by the {Circuit Lemma} (Lemma~\ref{l:circuit}), it suffices for us to restrict our search for $\ket{\psi}$ to the eigenvectors of $\xT_W$ for a closed walk $W$, \eg~a cycle.
Define a \emph{discretizing cycle} as a directed cycle $C$ (starting and ending at some vertex $v$) with cycle matrix ${\xT_C \not\propto^\ast I}$. {For such cycles, the Circuit Lemma} allows us to narrow down our search for $\ket{\psi_v}$ to the eigenvectors of $\xT_C$, of which there are at most two.
This raises two questions: (1) How to find discretizing cycles efficiently, and (2) how to deal with variables which are not on any discretizing cycle.

As noted in Section~\ref{scn:intro}, product operators complicate the task of detecting discretising cycles, but also provide a second way to narrow the search for assignments $\ket{\psi_v}$ leading to conflict-free CRs.
\begin{lemma}[Product Constraint Lemma]
	\label{l:prodconstr}
  In a product-satisfiable instance of \tq\ with a rank-1 product constraint projecting onto a state $\ket{\phi_{uv}} = \ket{\gamma_u} \ox \ket{\gamma_v}$, at least one of the CRs at vertex $u$ or $v$ with states $\ket{\smash{\gamma_u^\bot}}$ or $\ket{\smash{\gamma_v^\bot}}$, respectively, is conflict-free.
\end{lemma}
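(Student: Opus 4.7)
The plan is to invoke the contrapositive of the Conflict Lemma (Lemma~\ref{l:CRconflict}): if there is \emph{any} satisfying product state $\ket{\Psi}$ whose marginal on some vertex $w$ is (proportional to) $\ket{\psi_w}$, then a CR induced at $w$ with $\ket{\psi_w}$ must be conflict-free. So it suffices to exhibit a satisfying product state whose marginal on either $u$ or $v$ is proportional to $\ket{\smash{\gamma_u^\bot}}$ or $\ket{\smash{\gamma_v^\bot}}$ respectively.

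To do so, first I invoke the hypothesis that the instance is product-satisfiable: let $\ket{\Psi} = \bigotimes_{w} \ket{\psi_w}$ be any product state satisfying all of the constraints. In particular, the rank-1 product constraint $\ketbra{\phi_{uv}}{\phi_{uv}} = \ketbra{\gamma_u \gamma_v}{\gamma_u \gamma_v}$ (suitably extended to all $n$ qubits by tensoring with the identity) must annihilate $\ket{\Psi}$. Since the constraint factorises on qubits $u$ and $v$, and since $\ket{\Psi}$ is a product state, this annihilation condition reduces to
\begin{equation}
    \bracket{\gamma_u}{\psi_u}\cdot\bracket{\gamma_v}{\psi_v} \;=\; 0.
\end{equation}
Thus at least one of the two factors vanishes, \ie,~$\ket{\psi_u} \propto \ket{\smash{\gamma_u^\bot}}$ or $\ket{\psi_v} \propto \ket{\smash{\gamma_v^\bot}}$.

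Without loss of generality, suppose $\ket{\psi_u} \propto \ket{\smash{\gamma_u^\bot}}$. Then $\ket{\Psi}$ is a satisfying product state whose marginal on $u$ is (proportional to) $\ket{\smash{\gamma_u^\bot}}$. By the contrapositive of the Conflict Lemma, the CR induced at $u$ with $\ket{\smash{\gamma_u^\bot}}$ cannot produce a conflict, and so it is conflict-free. The symmetric case yields that the CR at $v$ with $\ket{\smash{\gamma_v^\bot}}$ is conflict-free. In either case, the desired conclusion holds.

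There is no real obstacle in this proof; the only subtle point is that the Conflict Lemma is insensitive to scalar factors, which is consistent with our overall convention of representing single-qubit states as vectors up to nonzero proportionality. The linearity of the transfer matrices ensures that the CR induced with $\ket{\smash{\gamma_u^\bot}}$ produces assignments proportional to those induced by any vector proportional to $\ket{\psi_u}$, so that the existence of a conflict is itself invariant under such scalar rescalings.
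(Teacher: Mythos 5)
Your proof is correct and takes essentially the same route as the paper: both arguments reduce to showing that any satisfying product state must place $\ket{\smash{\gamma_u^\bot}}$ on $u$ or $\ket{\smash{\gamma_v^\bot}}$ on $v$, and then invoke the contrapositive of the Conflict Lemma (Lemma~\ref{l:CRconflict}). The only cosmetic difference is that you derive the dichotomy directly from the annihilation condition $\bracket{\gamma_u}{\psi_u}\bracket{\gamma_v}{\psi_v}=0$, whereas the paper phrases the same fact via the transfer matrix $\xT_{uv}\ket{\psi_u}\propto\ket{\smash{\gamma_v^\bot}}$.
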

\begin{proof}
	Suppose that the instance is product satisfiable, but that a CR starting at qubit $u$ with state $\ket{\smash{\gamma_u^\bot}}$ has a conflict.
	Then by the Conflict Lemma (Lemma~\ref{l:CRconflict}), for any satisfying product state $\ket{\psi}=\bigotimes_{v \in V} \ket{\psi_v}$, we have $\ket{\psi_u} \not\propto \ket{\smash{\gamma_u^\bot}}$.
	By construction, we have $\ket{\psi_v} \propto \xT_{uv} \ket{\psi_u} = \ket{\smash{\gamma_v^\bot}} \ne 0$.
	Thus a CR induced at $v$ with $\ket{\smash{\gamma_v^\bot}}$ will be conflict-free {(as otherwise $\ket{\psi}$ cannot be a satisfying assignment, again by the Conflict Lemma)}.
\end{proof}
\noindent Using Lemma~\ref{l:prodconstr} together with the Set-And-Forget Theorem (Theorem~\ref{thm:setandforget}), we may find a partial assignment satisfying any given product constraint; repeating this for all product constraints will {either (\textit{i}) reveal that the original \tq\ instance is unsatisfiable, (\textit{ii}) yield a satisfying assignment for the entire instance, or (\textit{iii}) yield an equivalent instance of \tq\ in which all constraints are projectors onto \emph{entangled} states.}

Let us call an instance of \tq\ \emph{irreducible} if it has a connected interaction graph $G'$, and all of its constraints are rank-1 projectors onto entangled states.
In such an instance of \tq, all transfer matrices are invertible.
A conflict-free CR induced at any vertex will yield assignments for every other vertex; thus, a single discretizing cycle suffices to determine whether or not the instance is satisfiable.
We show that when a discretising cycle is present in such an instance of \tq, it is easily found:

\begin{lemma}
  \label{l:spanTreeDiscrete}
  Suppose $G'$ is an interaction graph of an irreducible instance of \tq, which contains a discretizing cycle $C$.
  Let $T \subset G'$ be a tree which contains all of the vertices of $C$.
  Then there is at least one edge $e$ in $C$, such that the (unique) cycle in the graph $T \cup \{e\}$ is a discretizing cycle.
\end{lemma}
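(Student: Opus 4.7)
The plan is to prove the contrapositive: assume every non-tree edge $e$ of $C$ produces a non-discretizing fundamental cycle in $T \cup \{e\}$, and derive $\xT_C \propto^\ast I$, contradicting the hypothesis that $C$ is discretizing. Because the instance is irreducible, every transfer matrix $\xT_{ij}$ is invertible (its constraint projects onto a Schmidt-rank-$2$ state), hence so is every walk matrix; in particular $\propto^\ast$ coincides with $\propto$, and we may freely use $\xT_{W\rev} \propto \xT_W^{-1}$ from Section~\ref{sscn:reddefs}.

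Write $C = (v_0, v_1, \ldots, v_k = v_0)$ and $e_i = (v_{i-1}, v_i)$, and let $\tau_i$ denote the (unique) path from $v_{i-1}$ to $v_i$ in $T$. If $e_i$ lies in $T$, then $\tau_i = e_i$ and $\xT_{\tau_i} = \xT_{e_i}$. If $e_i$ is a non-tree edge, then orienting the fundamental cycle in $T \cup \{e_i\}$ to start at $v_{i-1}$ and begin with $e_i$, its cycle matrix is $\xT_{\tau_i\rev}\, \xT_{e_i}$; the non-discretizing hypothesis gives $\xT_{\tau_i\rev}\, \xT_{e_i} \propto I$, and combined with $\xT_{\tau_i\rev} \propto \xT_{\tau_i}^{-1}$ we obtain $\xT_{e_i} \propto \xT_{\tau_i}$. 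Substituting each factor into $\xT_C = \xT_{e_k} \cdots \xT_{e_1}$ yields $\xT_C \propto \xT_W$, where $W$ is the closed walk at $v_0$ obtained by concatenating $\tau_1, \ldots, \tau_k$; by construction $W$ lies entirely in $T$.

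It then suffices to show that any closed walk in a tree has walk matrix $\propto I$, which is the main (and only nontrivial) step to spell out. I would proceed by induction on the length of such a walk $W = (u_0, u_1, \ldots, u_\ell)$: root $T$ at $u_0$, pick an index $i$ at which $W$ attains maximum depth, and observe that both $u_{i-1}$ and $u_{i+1}$ must be the (unique) parent of $u_i$, hence coincide. The cancellation $\xT_{u_i u_{i-1}}\, \xT_{u_{i-1} u_i} \propto I$ shortens $W$ by two edges while preserving $\xT_W$ up to a scalar, closing the induction (with base case $\ell = 0$ giving $\xT_W = I$). Combining this with the previous paragraph, $\xT_C \propto \xT_W \propto I$, contradicting the assumption that $C$ is discretizing. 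The care required lies in tracking orientations and starting vertices of cycles, but this is harmless since the property ``$\xT_{(\cdot)} \propto^\ast I$'' is manifestly invariant under cyclic reparametrisation and reversal of a walk.
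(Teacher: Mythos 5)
Your proof is correct and follows essentially the same route as the paper's: both replace each edge of $C$ by the corresponding tree path, use that a closed walk confined to $T$ has walk matrix proportional to $I$, and conclude that some edge of $C$ must disagree with its tree path, yielding a discretizing fundamental cycle. You phrase it as a contrapositive and spell out the tree-walk cancellation induction in more detail, but the substance is identical.
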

\begin{proof}
 	In the tree $T$, there is a unique path $P_{vw}$ from any given vertex $v \in V$ to any other connected vertex $w$.
 	Furthermore, by the irreducibility of the \tq\ instance, $\xT_{P_{vw}}$ is non-singular in each case.
 	Suppose that $C = (v_1, v_2, \ldots, v_\ell, v_1)$ is a discretizing cycle in the implication graph $G$.
 	Consider the closed walk from $v_1$ to itself in $T$, given by
$	
		W = P_{v_1 v_2} P_{v_2 v_3} \cdots P_{v_\ell v_1}.
$
	By induction, we may show that the truncated walk $W' = P_{v_1 v_2} P_{v_2 v_3} \cdots P_{v_{\ell{-}1} v_\ell}$ satisfies $\xT_{W'} \propto \xT_{P_{v_1 v_\ell}} \!\propto \xT_{P_{v_\ell v_1}}^{-1}$ for each $\ell$: thus $\xT_W \propto I$.
	However,
$
		\xT_{C} = \xT_{v_\ell v_1} \cdots \xT_{v_2 v_3} \xT_{v_1 v_2} \not\propto I
$
	by hypothesis.
	Then there is an edge $vw$ in $C$ for which $\xT_{vw} \not\propto \xT_{P_{vw}}$.
	Then the unique cycle $C'$ in $T \cup \{vw\}$ contains the path $P_{vw}$ from $v$ to $w$, as well as the edge $vw$, and has cycle matrix $\xT_{C'} = \xT_{wv} \xT_{P_{vw}} \propto \xT_{vw}^{-1} \xT_{P_{vw}} \not\propto I$.
\end{proof}

\begin{theorem}[Cycle Discovery Theorem]
  \label{thm:DFSdiscret}
  Suppose $G'$ is the interaction graph of an irreducible instance of \tq, and contains a discretizing cycle $C$.
  Then a depth-first search from any vertex $v \in V$, in which each edge is traversed at most once, suffices to discover a discretizing cycle $C'$.
\end{theorem}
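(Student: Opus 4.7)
The plan is to combine the previous lemma (\ref{l:spanTreeDiscrete}) with the standard structural properties of DFS trees in undirected graphs. Since the instance is irreducible, $G'$ is connected, so a DFS from any starting vertex $v$ visits every vertex and produces a spanning tree $T$ of $G'$. In particular $T$ contains every vertex of the hypothesized discretizing cycle $C$.

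By Lemma~\ref{l:spanTreeDiscrete}, there exists an edge $e = uw$ of $C$ such that the unique cycle $C'$ in $T \cup \{e\}$ is discretizing. Since $T$ is a tree it contains no cycles, so the edge $e$ cannot itself belong to $T$ (otherwise $T \cup \{e\} = T$ would contain $C'$). In an undirected DFS, every edge of $G'$ is either a tree edge or a back edge: thus $e$ is a back edge, and when the DFS examines $e$ for the first time from one of its endpoints, the other endpoint has already been visited and lies on the current recursion stack. At that moment the algorithm can read off the unique tree path $P_{uw}$ in $T$ between the endpoints of $e$ from the DFS parent pointers, and output $C' = P_{uw} \cup \{e\}$.

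It remains only to observe that each edge is traversed at most once, which is the standard property of DFS: tree edges are traversed exactly once when the DFS recursion descends them, and each non-tree (back) edge is examined at most once from each endpoint, but can be marked as processed after the first examination. In particular, the back edge $e$ guaranteed by Lemma~\ref{l:spanTreeDiscrete} will be examined during the DFS, at which point the fundamental cycle $C'$ is reported as the desired discretizing cycle.

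The main obstacle, more subtle than the existence argument itself, is ensuring that one can actually \emph{detect} that $C'$ is discretizing upon encountering the back edge, rather than merely guaranteeing that \emph{some} back edge yields a discretizing fundamental cycle; handling this efficiently requires maintaining, for each vertex $x$ visited so far, the path matrix $\xT_{P_{vx}}$ from the root of the DFS tree to $x$, so that when a back edge $uw$ is examined, the cycle matrix of its fundamental cycle can be computed as $\xT_{uw}\,\xT_{P_{vu}}\,\xT_{P_{vw}}^{-1}$ (which is well-defined since all transfer matrices are invertible by irreducibility) and compared against a scalar multiple of $I$ in constant time. This bookkeeping is what keeps the DFS within a single traversal of each edge while still allowing the discretizing cycle to be recognized when it is first encountered.
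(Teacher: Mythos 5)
Your proposal is correct and follows essentially the same route as the paper: run a DFS from $v$, maintain the path operators $\xT_{P_{vx}}$ along the DFS tree, test each back edge $uw$ by comparing $\xT_{uw}\,\xT_{P_{vu}}\,\xT_{P_{vw}}^{-1}$ against a scalar multiple of $I$, and invoke Lemma~\ref{l:spanTreeDiscrete} applied to the DFS tree to guarantee that some back edge must fail the test when a discretizing cycle exists. The paper's proof is a more compressed version of exactly this argument, so there is nothing further to add.
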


\begin{proof}
	Consider a DFS starting from any vertex $v \in V$.
	Define a tree $T \subset G$, in which each edge $e$ traversed by the DFS is included {if and only if $e$ is traversed for the first time some vertex is visited.}
	As the DFS reaches each vertex $w$, it also computes the path operator $\xT_{P_{vw}}$ for the path taken from $v$ to $w$.
	Each time the DFS traverses an edge $uw$ from some vertex $u$ to a vertex $w$ which it has previously visited, it tests whether $\xT_{\!\!\;P_{vu}} \propto \xT_{\!\!\;wu} \xT_{\!\!\;P_{vw}}$.
	If so, it continues the DFS from $w$.
	Otherwise the cycle $C'$ consisting of $P_{vu}\rev P_{vw}$ 
	followed by $wu$ is discretizing, as ${\xT_{C'} \propto \xT_{uw} \xT_{P_{vu}} \xT_{P_{vw}}^{-1} \not\propto^\ast I}$.
	Conversely by Lemma~\ref{l:spanTreeDiscrete}, if $G$ has a discretizing cycle, the DFS must eventually traverse such an edge.
\end{proof}

\noindent Implicit in Theorem~\ref{thm:DFSdiscret} is a linear-time algorithm for finding discretising cycles in an irreducible instance of \tq, when one is present.
It remains to describe how to treat irreducible instances which have no discretizing cycles.
The absence of any means of discretising the state-space of any qubit in such an instance actually represents freedom of choice in this case; while this is implicit in Refs.~\cite{B06,LMSS10,dBOE10}, we prove it here for the sake of completeness.
\begin{lemma}[Free Choice Lemma]
	\label{l:freechoice}
	In an irreducible instance of \tq\ with no discretizing cycles, any choice of single-qubit state $\ket{\psi_v}$ for some $v$ in the component gives rise to a conflict-free CR.
\end{lemma}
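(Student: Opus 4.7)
The plan is to show that in an irreducible instance with no discretising cycles, the walk operator between any two fixed endpoints is determined, up to a nonzero scalar, by the endpoints alone. Once this holds, the Free Choice Lemma follows immediately: starting from any $\ket{\psi_v} \ne 0$ and because every transfer matrix is invertible, every assignment computed during the CR is nonzero, so the DFS in $G'$ proceeds with no broken edges. Moreover, each vertex $w$ the DFS visits receives its first assignment $\xT_{W_1}\ket{\psi_v}$ for whatever walk $W_1$ the DFS took to reach it, and any later attempt to re-assign $w$ via a different walk $W_2$ produces $\xT_{W_2}\ket{\psi_v} \propto \xT_{W_1}\ket{\psi_v}$, so no conflict arises.

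The main step is therefore to prove $\xT_W \propto I$ for every closed walk $W$ at any vertex. Invertibility of all transfer matrices upgrades the hypothesis ``no discretising cycles'' (i.e.\ $\xT_C \propto^\ast I$) to $\xT_C \propto I$ for every simple cycle $C$. To handle arbitrary closed walks, I would argue by induction on length. If $W = (w_0, w_1, \ldots, w_k)$ with $w_0 = w_k$ has a repeated internal vertex $w_i = w_j$ with $(i,j) \neq (0,k)$, split it into the inner closed walk $W_{\mathrm{in}} = (w_i, \ldots, w_j)$ at $w_i$ and the outer closed walk $W_{\mathrm{out}} = (w_0, \ldots, w_i, w_{j+1}, \ldots, w_k)$ at $v$. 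Writing $A$ and $C$ for the walk operators of the pre- and post-segments, one has $\xT_W = C\, \xT_{W_{\mathrm{in}}}\, A$ and $\xT_{W_{\mathrm{out}}} = C A$; applying the induction hypothesis to both strictly shorter closed walks yields $\xT_{W_{\mathrm{in}}}, \xT_{W_{\mathrm{out}}} \propto I$, and hence $\xT_W \propto I$. The base case, in which $W$ is a simple cycle, is covered above.

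To finish, given any two walks $W_1, W_2$ from $v$ to $w$, the concatenation of $W_1$ with $W_2\rev$ is a closed walk at $v$, so $\xT_{W_2\rev} \xT_{W_1} \propto I$; combining with $\xT_{W_2} \xT_{W_2\rev} \propto I$ (recorded in Section~\ref{sscn:reddefs}) yields $\xT_{W_1} \propto \xT_{W_2}$, which is exactly what the first paragraph needed. I expect the only delicate step is the bookkeeping for the cycle-peeling induction and tracking how the scalar factors from each peeled cycle accumulate; no substantive obstacle is lurking, since invertibility prevents the scalars from collapsing to zero at any stage.
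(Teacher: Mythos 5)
Your proof is correct and follows essentially the same route as the paper's: both reduce conflict-freeness to the claim that any two walks $W_1,W_2$ from $v$ to $w$ satisfy $\xT_{W_1}\propto \xT_{W_2}$, obtained by applying the no-discretising-cycles hypothesis (together with invertibility of all transfer matrices in an irreducible instance) to the closed walk formed by $W_1$ followed by $W_2\rev$. The only difference is that the paper asserts $\xT_{P_{vw}\rev}\xT_{P'_{vw}}\propto I$ directly from that hypothesis, whereas you justify it by the cycle-peeling induction from simple cycles to arbitrary closed walks --- a step the paper leaves implicit, and in which the one base case your induction does not label as a simple cycle (a length-two closed walk reusing a single edge) is disposed of by the identity $\xT_W \xT_{W\rev}\propto^\ast I$ recorded in Section~\ref{sscn:reddefs}.
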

\begin{proof}
	Let $G'$ be the interaction graph.
	Consider a CR induced at $v$ with $\ket{\psi_v}$, and consider the paths $P_{vw}$ to each vertex $w$, by which the CR makes its first assignment $\ket{\psi_w} := \xT_{P_{vw}} \ket{\psi_v}$ to $w$.
	If $P'_{vw}$ is another walk from $v$ to $w$, we have $\xT_{P_{vw}\rev} \xT_{P'_{vw}} \propto I$, from the hypothesis that there are no discretising cycles; then $\xT_{P'_{vw}} \propto \xT_{P_{vw}}$.
	Thus, regardless of the choice of $\ket{\psi_v}$, a consistent assignment $\xT_{P'_{vw}} \ket{\psi_v}$ is computed every time the CR traverses an edge to visit $w$.
\end{proof}

\section{A linear-time \tq\ algorithm}\label{scn:alg}

We finally present our \tq\ algorithm in Figure~\ref{fig:algorithmq}, whose correctness follows immediately by combining the results of Section~\ref{scn:reductions}.
Following~\cite{EIS76}, we implement CRs (corresponding to their trial assignments) in parallel to ensure a linear bound on run-time; this is expanded upon in Section~\ref{scn:runtime}.

\begin{figure}[t]
		\begin{framed}
		\leftskip-0.25em
		\textbf{Input:}~
		\begin{minipage}[t]{0.85\textwidth}\raggedright
			An instance of \tq\ consisting of rank-$1$ projectors $\mathscr P = \set{\Pi_{ij}}$ with interaction graph $G'=(V,E)$, with at most two parallel edges $(u,v)$ per distinct $\{u,v\} \subset V$.
		\end{minipage}
		\smallskip
		\begin{enumerate}
		\item
			\emph{Discretize on product constraints} ---
			While there exists a projector $\Pi_{ij} = \ket{\phi_{ij}}\bra{\phi_{ij}}$ such that $\ket{\phi_{ij}} = \ket{\gamma_i} \ox \ket{\gamma_j}$ is a product state: simulate CRs at each $v \in \{i,j\}$ with $\ket{\gamma_v^\perp}$, in parallel.
			\begin{enumerate}
			\item
				If conflicts arise in both CRs, halt and {output ``\textsc{unsat}''.}
			\item
				Fix the assignments for the first conflict-free CR that terminates, remove the set $A$ of vertices that it visited from $G'$, and go to Step~1.
			\end{enumerate}

		\item
			\emph{Discretize on cycles} ---
				While there exists $v \in V$: search for a discretizing cycle $C \subseteq G'$ in the same connected component of $v$.
				\begin{itemize}[leftmargin=1.2em]
				\item
					If such a cycle $C$ is found at a vertex $u$: Let $\xT_C$ be its cycle matrix, and $S$ denote the set of eigenvectors of $\xT_C$.
					Simulate CRs at $u$ with each $\ket{\psi_u} \in S$, in parallel.
					\begin{enumerate}
					\item
						If conflicts arise in both CRs, halt and {output ``\textsc{unsat}''.}
					\item
						Fix the assignments for the first conflict-free CR that terminates, remove the set $A$ of vertices that it visited from $G'$, and go to Step~2.
					\end{enumerate}
			\item
				If no such cycle is found: Induce a CR at $v$ with $\ket{\psi_v} := \ket{0}$.
				Fix assignments made by the CR, remove the set $A$ of vertices that it visits from $G'$, and go to Step~2.
			\end{itemize}
		\item
			\emph{Normalize} --- For each qubit $v$, compute whether the assignment $\ket{\psi_v}$ is normalised: if not, compute a normalised version $\ket{\psi_v} := \ket{\psi_v} \big/ \sqrt{\bracket{\psi_v}{\psi_v}}$.
		\end{enumerate}
		\textbf{Output:}~
		\begin{minipage}[t]{0.85\textwidth}\raggedright
			{``\textsc{unsat}'',} or unit vectors $\ket{\psi_v} \in \cH$ for each $v \in V$ which jointly satisfy $\mathscr P$.
		\end{minipage}
		\hspace{-8mm}
		\end{framed}
		\vspace*{-2ex}
		\caption{%
			\label{fig:algorithmq}
			An algorithm for \tq, denoted \solveq.
		}
\end{figure}

\paragraph{Preprocessing stage to impose input constraints.} 
%
For conciseness, we present \solveq\ in Figure~\ref{fig:algorithmq} with restrictions on the inputs it takes.
As we indicate in Section~\ref{scn:preliminaries}, following Chen~\etal~\cite{CCDJZ11}, these restrictions ensure that the instance presented to \solveq\ is either satisfiable by a product state or unsatisfiable.
{These restrictions can can be imposed through a pre-processing phase, as follows.}
For each pair $\{u,v\}$ subject to multiple constraints, sum the projectors to obtain positive semidefinite operator $S_{uv}$.
Then perform the following:
\vspace{-0.5ex}
\begin{enumerate}[leftmargin=3ex, itemsep=0ex]
	\item
		If any pair $\{u,v\}$ has $\operatorname{rank}(S_{uv})=4$, halt with output \textsc{unsat} (as $\ker(S_{uv})$ contains no states).
	\item
		For each pair $\{u,v\}$ with $\operatorname{rank}(S_{uv})=2$, replace the constraints on $\{u,v\}$ with $\Pi_{uv,1} = \ketbra{\eta_1}{\eta_1}$ and $\Pi_{uv,2} = \ketbra{\eta_2}{\eta_2}$, 
		for linearly independent columns $\ket{\eta_1},\ket{\eta_2}$ of $S_{uv}$.
	\item
		For each pair $\{u,v\}$ with $\operatorname{rank}(S_{uv})=3$, record the unique state $\ket{\psi_{uv}}$ which spans $\ker(S_{uv})$ as a joint assignment to $(u,v)$, and remove the constraints on $\{u,v\}$.		
		If $\ket{\psi_{uv}} = \ket{\psi_u} \ox \ket{\psi_v}$, record $\ket{\psi_u}$ and $\ket{\psi_v}$ as assignments to $u$ and $v$ respectively.
		(If any qubit is subject to conflicting assignments, halt with output \textsc{unsat}.)
	\item
		For each pair $\{u,v\}$ given an assignment $\ket{\psi_{uv}}$ in the preceding step:
\vspace{0.5ex}		
\begin{compactitem}
		\item
			If $\ket{\psi_{uv}} = \ket{\psi_u} \ox \ket{\psi_v}$, induce CRs (sequentially) at $u$ with $\ket{\psi_u}$ and at $v$ with $\ket{\psi_v}$.
\vspace{0.5ex}		
		\item
			If not, and there are non-product constraints $\Pi_{iu}$ or $\Pi_{iv}$ for any $i$, halt with output \textsc{unsat}
			(as any state of $i$ is compatible only with product states on $\{u,v\}$).
			Otherwise, for each $\Pi_{iu} = {\ketbra{\gamma_i}{\gamma_i} \ox \ketbra{\gamma_u}{\gamma_u}}$ or $\Pi_{iv} = {\ketbra{\gamma_i}{\gamma_i} \ox \ketbra{\gamma_v}{\gamma_v}}$, induce a CR (sequentially) at $i$ with $\ket{\smash{\gamma_i^\bot}}$.
		\end{compactitem}
		For any CR induced, halt (with output \textsc{unsat}) either if the CR has a conflict, or if it makes an assignment to some other qubit $w$ which has been given a different assignment as a result of a rank-3 constraint.
		If no conflict is detected, record the assignments, and remove the set $A$ of qubits given assignments from $G'$.
	\end{enumerate}
This preprocessing phase involves much the same subroutines as \solveq\ itself, and does not contribute to the asymptotic run-time.
(We include these steps in our detailed runtime analysis in Appendix~\ref{apx:runtime}.)
	

\section{Runtime analysis}\label{scn:runtime}
We briefly sketch the runtime analyses for \solveq\ in terms of field operations over $\complex$ and bit operations, and discuss an optimization for the setting of product state constraints. A more in-depth treatment is given in Appendix~\ref{apx:runtime}. We assume a random-access machine, so that memory access {takes} unit time. 
{The constraints $\Pi_{i}$ are specified as $4 \x 4$ matrices with coefficients from a finite-degree field extension $\F{\!\;:\!\;}\Q$, whose specification is also part of the input; arithmetic operations over such number fields can be performed efficiently~\cite{cohen93}.
(Further details given in Appendix~\ref{sscn:bitcomplexity}.) From this representation we extract the basis vectors $\ket{\eta_i}$ for the image of $\Pi_{i}$ by taking columns of $\Pi_{i}$, and omit normalisation: \solveq\ then uses $\ket{\eta_i}$ to represent $\Pi_i$.
Vectors are only normalised as the final step of the algorithm.}

\paragraph{Field operations.} \solveq\ requires $O(n+m)$ operations over $\complex$, for $n$ and $m$ the number of variables and clauses, respectively. {As each vector $\ket{\eta_i}$ is in $\C^4$, operations on them (such as determining if $\ket{\eta_i}$ is a product constraint in Step~1) require $O(1)$ field operations. Following EIS~\cite{EIS76}, we simulate CRs in parallel by interleaving their steps, terminating both simulations as soon as one of them is found to be conflict-free. In the preprocessing phase and in Step 1b, this ensures that the number of vertices and edges removed (upon completion of a conflict-free CR) is proportional to the number of vertices and edges visited during the parallel CRs. Hence, the total number of edge-traversals of \solveq\ is $O(m)$. Finally, by Step 2, the instance has been simplified to a disjoint union of irreducible instances. Theorem~\ref{thm:DFSdiscret} ensures that if a discretizing cycle exists in any of the components, it can be found by a depth-first search; moreover, a single conflict-free CR suffices to assign satisfying states to all vertices in each component.}

\paragraph{Bit operations.} The bit-complexity of \solveq\ differs from the field-operation complexity, for the simple reason that multiplying $k$ transfer matrices yields a path matrix with $O(k)$-bit entries.
Thus, {operations such as determining the eigenvectors of such matrices, or whether $\ket{\psi} \propto \ket{\phi}$ for vectors in the image of these matrices, can take time $O(M(k))$, where $M(k)$ is the time to multiply two $k$-bit integers.
This follows from the fact that computing $\sqrt D \in \Z$ for a perfect square $D \in \Z$ can be performed in $O(M(n))$ time using Newton's method {(see \eg~Theorem~9.28 of Ref.~\cite{GG03})}; and that equality testing over $\Q$ is bounded by $O(M(n))$, for rationals $r,s \in \Q$ with $n$ bit representations as ratios. (To test whether $\tfrac{a}{b}$ and $\tfrac{c}{d}$ are equal, one tests whether $ad - bc = 0$.)
Since the number of times we might need to compute eigenvectors or decide proportionality may scale as $m+n$, the runtime of $O((m+n)M(n))$ follows.}

It may be necessary for \solveq\ to represent its output using further field extensions $\E {\;\!:\;\!} \F$, for instance, when solving the characteristic polynomial $\det(\lambda I - \xT_C)$ of a cycle matrix $\xT_C$, if the discriminant $D = (\Tr \xT_C)^2 + 4(\det \xT_C)$ is not a perfect square in $\F$.
(We discuss this aspect of the algorithm in Appendix~\ref{sscn:bitcomplexity}.)
However, by the Set and Forget Theorem~\ref{thm:setandforget}, any extension required by a CR will be independent of the CRs involved in the assignments made by other CRs; furthermore, the extensions involved in each CR is only quadratic, and specifically by a square root $\sqrt D$ of an element $D \in \F$.

The approach taken to the quadratic extensions by \solveq\ is unconventional.
Specifically, {unless $D\in\Q$,} we do not evaluate whether or not $\sqrt D$ is in $\F$ before defining the (possibly trivial) ``extension'' $\E = \F[\sqrt D]$.
That is, we allow representations of number fields $\F_k := \Q[\omega_1, \omega_2, \ldots, \omega_k]$ in which $\omega_j = \sqrt{\alpha_j}$ for some $\alpha_j \in \F_{j{-}1}$ (possibly including the case $\omega_1 = \sqrt{s}$ for $s \in \Q$), and where it may come to pass that $\omega_j \in \F_{j{-}1}$.
This prevents us from easily presenting coefficients in a normal form: crucially however, it is still possible for us to perform equality tests and arithmetic operations in time $O(M(n))$, for $\alpha \in \F_k$ expressed as $\tfrac{1}{\mu} f(\omega_1, \ldots, \omega_k)$ for $\mu \in \Z$ and $f \in \Z[x]$ with coefficients of size $O(n)$, provided that $k$ is bounded by a constant.
(In the case of \solveq, we bound $k \le 3$.)

Thus while the output of \solveq\ may not be reduced, it nevertheless presents exact, normalised, satisfying  states by means of tensor factors.
Complete details are to be found in Appendix~\ref{sscn:bitcomplexity}.

\paragraph{Reduced complexity of \tq\ for product constraints.} 
Using a simple optimization which exploits product constraints, \solveq\ can in fact accept inputs over any field extension ${\F\:\!{:}\:\!\Q}$ (algebraic or otherwise), and solve them with $O(n+m)$ bit operations provided that all projectors are product operators.
This requires only that arithmetic operations and equality testing against $0$ can be performed in $\F$ in $O(1)$ time on inputs with representations of size $O(1)$.
Specifically: the transfer matrix of a product constraint $\Pi_{uv} = \ketbra{\gamma_u}{\gamma_u} \ox \ketbra{\gamma_v}{\gamma_v}$ is $\xT_{uv} \propto \ket{\smash{\gamma_v^\bot}}\bra{\gamma_u}$, whose image is spanned by $\ket{\smash{\gamma_v^\bot}}$.
For any assignment $\ket{\psi_u}$ to $u$, if $\xT_{u,v} \ket{\psi_u} \ne 0$, we can set $v$ to $\ket{\smash{\gamma_v^\bot}}$ (which by assumption on the input requires $O(1)$ bits), as opposed to the potentially more complex vector $\xT_{u,v} \ket{\psi_u} \propto \ket{\smash{\gamma_v^\bot}}$.
Thus, in Step~1, the complexity of the assignments made by a CR are no more complex than the vectors of the projectors $\Pi_{uv}$ in the input, so that all algebraic operations may be performed in $\Theta(1)$ time rather than $O(M(n))$ time.
In particular, for classical \mbox{$2$-SAT} instances, we recover an $O(m+n)$ upper bound on the bit-complexity of \solveq, matching the asymptotic performance of the APT and EIS algorithms~\cite{APT79,EIS76}.

\section{On lower bounds for bit complexity}
\label{scn:lower bounds}

Most investigations into \tq\ are presented in terms of unit-cost operations over some algebraic number field $\F$.
As a result, no restrictions are usually put on how the output of a classical solution to \tq\ is represented.
To consider lower bounds on the bit-complexity of presenting a solution to \tq, it becomes necessary to consider what restrictions to impose on the output, as without such restrictions the notion of what form the output may take becomes ill-defined.
We impose the restriction of outputs which are \emph{rationalised}, as follows.
Let $\F = \Q[\omega]$ be algebraic number field, so that $\omega$ is an algebraic number whose minimal polynomial $p$ is a monic polynomial over $\Z$.
An element $\alpha \in \F$ is presented in \emph{rationalised form} by an expression of the form $f(\omega)/D = \alpha$, where $D > 0$ is an integer and $f \in \Z[x]$ is an polynomial such that $\deg(f) < \deg(p)$.
Despite the unconventional representation described in Section~\ref{scn:runtime}, this is one constraint which the output of \solveq\ respects.

There are further restrictions which one might consider, such as the output state vectors being normalised (which \solveq\ satisfies), and that they be \emph{reduced}: that the coefficients $\alpha = f(\omega)/D$ satisfy $\gcd(f,D) = 1$.
Consider, for instance, an algorithm which produces its output in \emph{minimal form}: each state that it outputs is normalised, in reduced rationalised form, and involves the minimal field extension ${\F\:\!{:}\:\!\Q}$ necessary to do so, represented as $\F = \Q[\omega]$ where the minimal polynomial of $\omega$ is a monic polynomial over the integers.
While \solveq\ does not compute outputs in minimal form (\eg,~it may fail to produce outputs in reduced form), we show that the multiplication time $O(M(n))$ for $n$ bit integers is a relevant lower bound for algorithms which do, suggesting that the role of $M(n)$ in the upper bound of \solveq\ is not merely accidental.
\begin{lemma}\label{l:last}
    There exist instances of \tq\ on $n$ vertices and $m\in O(n)$ clauses, such that exhibiting a requested tensor factor of a satisfying solution, in minimal form, requires $\Omega(M(n))$ bit operations in in the worst case.
\end{lemma}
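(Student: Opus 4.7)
The plan is to give a linear-time reduction from multiplication of two $n$-bit positive integers $a,b$ to producing a minimal-form marginal of a satisfying assignment for a family of \tq\ instances $\mathscr P(a,b)$. Since computing $ab$ from $a,b$ is, by the very definition of $M(n)$, an $\Omega(M(n))$ task, the claimed lower bound will follow.

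For the construction, we use $n$ qubits of which only four are active: $v_0,v_1,v_2,v_3$ (the remaining $n{-}4$ are isolated and may be assigned $\ket{0}$). We impose three projectors: (i)~a rank-$3$ projector $\Pi_{03} = I - \ketbra{00}{00}$ on $(v_0,v_3)$, whose unique kernel vector $\ket{00}$ is a product state, causing preprocessing to assign $v_0 = v_3 = \ket{0}$; (ii)~the rank-$1$ entangled projector on $(v_0,v_1)$ onto the span of $\ket{\phi_{01}} = -a\ket{00} + \ket{01} - \ket{10}$, whose transfer matrix is $\xT_{01} = \bigl(\begin{smallmatrix} 1 & 0 \\ a & 1 \end{smallmatrix}\bigr)$; and (iii)~the rank-$1$ entangled projector on $(v_1,v_2)$ onto the span of $\ket{\phi_{12}} = \ket{01} - b\ket{10}$, whose transfer matrix is $\xT_{12} = \bigl(\begin{smallmatrix} 1 & 0 \\ 0 & b \end{smallmatrix}\bigr)$. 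Both transfer matrices are invertible (determinants $1$ and $b\ne 0$), and direct calculation using the correspondence from Section~\ref{scn:preliminaries} between $2{\times}2$ matrices and rank-$1$ two-qubit projectors verifies the claimed transfer matrices. The interaction graph restricted to $\{v_0,v_1,v_2,v_3\}$ is the tree $v_3{-}v_0{-}v_1{-}v_2$, so no discretising cycle can arise, and the total specification of $\mathscr P(a,b)$ has size $O(n)$ and is constructible in $O(n)$ bit operations.

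By the Unique Assignment Lemma (Lemma~\ref{l:uniqueassignment}), starting from $v_0 = \ket{0}$ a conflict-free CR assigns $v_1 \propto \xT_{01}\ket{0} = \ket{0} + a\ket{1}$ and $v_2 \propto \xT_{12}\xT_{01}\ket{0} = \ket{0} + ab\ket{1}$; a straightforward dimension count on the kernels of the three projectors confirms that this is the unique satisfying state of $\mathscr P(a,b)$ up to a global scalar. In minimal form the marginal on $v_2$ equals $(\ket{0}+ab\ket{1})/\sqrt{1+a^2b^2}$. For $a,b\ge 1$, the integer $1+a^2b^2$ is never a perfect square (since $k^2-(ab)^2=1$ forces $ab=0$), so the minimal extension is $\F = \Q[\omega]$ with $\omega$ a root of the monic integer polynomial $x^2 - (1+a^2b^2)$. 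The rationalised coefficient of $\ket{1}$ is $ab\,\omega/(1+a^2b^2)$, and because $\gcd(ab,\,1+a^2b^2) = \gcd(ab,1) = 1$ this expression is already reduced; its numerator exhibits the integer $ab$ verbatim.

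Consequently, any algorithm that produces minimal-form marginals for \tq\ in time $T(n)$ yields a multiplication algorithm of total cost $O(n)+T(n)$: construct $\mathscr P(a,b)$ and query the marginal on $v_2$ in $O(n)$ bit operations, then read off the coefficient of $\omega$ in the numerator of the $\ket{1}$ entry in $O(n)$ bit operations. This forces $T(n) \in \Omega(M(n))$. The main subtlety in carrying out the plan is verifying that $\xT_{01}$ and $\xT_{12}$ really are the transfer matrices of the stated rank-$1$ entangled projectors and that the full solution space of $\mathscr P(a,b)$ is one-dimensional (so that the output is unambiguously the marginal we claim); both points reduce to elementary linear algebra on $\C^2\ox\C^2$ and are not a serious source of difficulty.
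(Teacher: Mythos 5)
Your reduction is logically sound as far as it goes, and the linear-algebra details (the transfer matrices of your two entangled projectors, the forced assignment $v_2 \propto \ket{0} + ab\ket{1}$, the coprimality of $ab$ and $1+a^2b^2$) all check out. But there is a genuine defect in the construction: you place the full $n$-bit integers $a$ and $b$ directly into the coefficients of two projectors. The paper's framework (Section~\ref{scn:preliminaries}) explicitly assumes a constant $K$ bounding the size of the specification of each $\Pi_i$, and the upper bound $O((n+m)M(n))$ that this lemma is meant to complement is proved under that assumption. An instance whose two constraints each occupy $\Theta(n)$ bits is outside that class, so your lower bound says nothing about whether the $M(n)$ overhead is necessary on the inputs \solveq\ actually targets; the multiplication is ``smuggled in'' through the input encoding rather than arising from the structure of the \tq\ instance. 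Relatedly, your claim that the satisfying state is ``unique up to a global scalar'' is false once the $n-4$ isolated qubits are counted (only the marginal on $v_2$ is unique), and your minimal-form argument does not treat the case where $1+a^2b^2$ is not square-free --- both are minor, but the first issue is not.

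The paper's proof avoids this by distributing the \emph{bits} $M_0,\ldots,M_{n-1}$ and $N_0,\ldots,N_{n-1}$ of the two integers across a chain of $2n+2$ qubits, one bit per transfer matrix, with every matrix entry in $\{0,1,2\}$ (hence $O(1)$ bits per projector). A singular product of the two constraints on the first edge pins qubit $0$ to $\ket{0}$, and the composition of the $2n+1$ walk operators along the chain is shown by induction to equal $\bigl(\begin{smallmatrix} M & 2^n \\ 2^n + MN & 2^nN\end{smallmatrix}\bigr)$, so the product $MN$ emerges only from multiplying $\Theta(n)$ constant-size matrices. That is the key idea your proposal is missing, and it is what makes the lower bound a meaningful counterpart to the $O((n+m)M(n))$ upper bound rather than a statement about instances with unboundedly large constraint descriptions. (The paper also has to rule out cheap non-product solutions, which it does by showing every other satisfying state is entangled across the whole chain and hence exponentially large to write out; your four-qubit gadget sidesteps this, but only because of the oversized coefficients.)
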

\begin{proof}
    Let $M$ and $N$ be positive, odd $n$-bit integers, with binary expansions $M = \sum_{t=0}^{n-1} 2^t M_t$ and $N = \sum_{t=0}^{n-1} 2^t N_t$, where $M_i, N_i \in \{0,1\}$ for each $0 \le i < n$.
		We construct an instance of \tq\ {whose unique product state solution is one} in which one of the qubits $q$ is assigned a state
    \begin{equation}
    \label{eqn:lowerBoundAssignment}
			\ket{\psi_q} \,:=\, \frac{M}{\sqrt D} \ket{0} + \frac{2^n + MN}{\sqrt D}\ket{1} \,=\, \frac{M \sqrt D}{D} \ket{0} + \frac{(2^n + MN)\sqrt D}{D}\ket{1},
    \end{equation}
    where $D = M^2 + (2^n \!+\! MN)^2$.
    Either the middle or the right-hand expression in Eqn.~\eqref{eqn:lowerBoundAssignment} is in rationalised and normalised form, depending on whether $D$ is a perfect square.
    As $M$, $2^n + MN$, and $D$ are coprime, that rationalised expression is in reduced form, if $\F = \Q[\sqrt D]$.
    If $D$ is neither a perfect square nor square-free, it may be that $\sqrt D$ is represented as $\delta \sqrt{D'} \in \Q[\sqrt{D'}]$, where $D = D' \delta^2$.
    In this case, by hypothesis, a representation of $\ket{\psi_q}$ in reduced form would be identical (up to signs) to
    \begin{equation}
      \ket{\psi_q} = \frac{M\sqrt{D'}}{D' \delta} \ket{0} + \frac{(2^n + MN) \sqrt{D'}}{D' \delta} \ket{1}.
    \end{equation}
		In any case, the minimal form representation would provide a specification of the extension element $\sqrt{D'}$, the denominators $D'\delta$, and the numerators $A = M$ and $B = 2^n + MN$ (or $A = -M$ and $B = -2^n - MN$, which yields an equivalent vector in $\Q[\sqrt{D'}]$).
		From such a representation, one could compute $MN$ simply as $B - 2^n$ (or $-B -2^n$ respectively), which requires time $O(n)$.

		The instance we construct is on a chain of $2n+2$ qubits, labelled $v \in \{0,1,2,\ldots,2n{+}1\}$, as follows.
		For $1 \le i \le n$, we define matrices
		\begin{align}
				\xT_{i{-}1,i}
		  &=
				\begin{pmatrix}
		      1				&		0		\;\;	\\
		      M_{n-i} & 	2		\;\;
		    \end{pmatrix},
		  &
				\xT_{n{+}i,n{+}1{+}i}
			&=
				\begin{pmatrix}
		      1				&		0		\;\;			\\
		      N_{n-i} & 	2		\;\;
		    \end{pmatrix};
		\end{align}
		and also two matrices $\xT_{n,n{+}1}$ and $\xT'_{0,1}$: 
		\begin{align}
		  \xT_{n,n{+}1} &= \begin{pmatrix} 0 & 1 \\ 1 & 0 \end{pmatrix},
		&
			\xT'_{0,1} &= \begin{pmatrix} 0 & 1\\ 0 & M_{n{-}1} \end{pmatrix}.
		\end{align}
		For each $i \in \{0,1,2,\ldots,2n\}$, we include a constraint $\Pi_{i,i{+}1}$ between qubits $i$ and $i+1$, with transfer matrix $\xT_{i,i{+}1}$; and we also include a second constraint $\Pi'_{0,1}$ between $0$ and $1$, with transfer matrix $\xT'_{0,1}$.
		The resulting instance of \tq\ has two rank-1 constraints between qubits $0$ and $1$, and one rank-1 constraint between all other consecutive pairs of qubits.
		By Chen~\etal~\cite{CCDJZ11}, this instance is then satisfiable by a product state if it is satisfiable at all.
		It is easy to show that all of the projectors have rational coefficients in this case, so we take the field of the representation to be $\Q$ itself.

		We show that there is a unique product state which satisfies the above instance of \tq.
		It is easy to show that the opposite transfer operator to $\xT'_{0,1}$ is
		\begin{equation}
		  \xT'_{1,0} \propto \begin{pmatrix} -M_{n{-}1} & 1 \; \\ 0 & 0 \; \end{pmatrix}
		\end{equation}
		so that $\xT'_{1,0} \xT_{0,1} \propto \ketbra{0}{1}$.
		The only eigenvector of this operator is $\ket{0}$, which is therefore the only single-qubit state on qubit $0$ which is consistent with a satisfying solution.
    As all other transfer operators are non-singular, this determines a unique assignment for all other qubits $i$ in the chain, determined by the first column of the walk operator $\xT_{[0,i]} := \xT_{i{-}1,i} \cdots \xT_{1,2} \xT_{0,1}$.
    It is easy to show for $1 \le i \le n$ that
    \begin{equation}
      \xT_{[0,i]} = \begin{pmatrix}
              1 & \;\;0\;\;	\\[1ex]
							\;\;\sum\limits_{\mathclap{1\le t \le i}} M_{n{-}t} 2^{i-t} & \;\;2^i\;\;
            \end{pmatrix},
    \end{equation}
		and that in particular
		\begin{equation}
      \xT_{[0,n]} = \begin{pmatrix}
              1 & 0	\\
							M & 2^n
            \end{pmatrix};	
		\end{equation}
		from this we easily obtain
		\begin{equation}
      \xT_{[0,n+1]} = \begin{pmatrix}
              M & 2^n\\
							1 & 0
            \end{pmatrix};	
		\end{equation}
		from which point we may prove by induction for $1 \le i \le n$ that
		\begin{equation}
      \xT_{[0,n+1+i]}
      = \begin{pmatrix}
              M \;\;&\;\; 2^n\\[1ex]
							2^i + M \sum\limits_{\mathclap{1\le t \le i}} N_{n-t} 2^{i-t} \;\;&\;\; 2^n \sum\limits_{\mathclap{1\le t \le i}} N_{n-t} 2^{i-t}\;
            \end{pmatrix};	
		\end{equation}
		so that
		\begin{equation}
      \xT_{[0,2n+1]}
      = \begin{pmatrix}
              M \;&\; 2^n\\
							2^n + M N \;&\; 2^n N
            \end{pmatrix}.
		\end{equation}
		Let $q$ be qubit $2n+1$.
		The only assignment to this qubit which is consistent with a satisfying assignment is then the state given by the first column of $\xT_{[0,2n+1]}$, which is $M\ket{0} + {(2^n + MN)\ket{1}}$; the vector given by Eqn.~\eqref{eqn:lowerBoundAssignment} is the normalised version of this vector.
		
		Using the techniques of Laumann~\etal~\cite{LMSS10}, we may show that the space of satisfying assignments of this instance has dimension $2$, spanned by the product solution above, and an entangled solution on all of the qubits.
		Considering all projectors except for $\Pi'_{0,1}$, there is an invertible (non-unitary) local transformation mapping all projectors $\Pi_{i{-}1,i}$ to $\ketbra{\Psi^-}{\Psi^-}$, the two-qubit antisymmetric projector.
		Thus the satisfying states for these projectors are the symmetric subspace on $S = 2n+2$ qubits, which is spanned by any collection of states of the form $\ket{\alpha_i}^{\otimes 2n+2}$, for $S+1 = 2n+3$ distinct states $\ket{\alpha_i}$.
		Any state in this space which is not a product state, is entangled across the entire chain of qubits.
		Undoing this change of local co-ordinates, it follows that any state which satisfies the above instance of \tq\ which is not a product state, is also entangled across the entire chain of qubits {(\ie,~it cannot be factorized across any cut). Since we require each factor to be explicitly written in the standard basis, such a solution would then require explicitly writing out the standard basis elements of a vector of dimension $2^{2n+2}$}; such solutions would require vectors of dimension $2^{2n+2}$ to represent.
		Any algorithm which in polynomial time exhibits one of the tensor factors of the solution, must therefore exhibit factors of the product solution.
		In particular, it must compute $\ket{\psi_q}$ if this is the required tensor factor.
		As we have already shown an $O(n)$ reduction from computing the product $MN$ to computing the minimal representation of $\ket{\psi_q}$, it follows that there is an $\Omega(M(n))$ lower bound for such an algorithm in the worst case.
\end{proof}
\begin{corollary}
    If there does not exist a $\Theta(n)$-time algorithm for multiplying two $n$-bit integers, then there does not exist an $O(m+n)$-time algorithm to present single-qubit marginals of satisfying solutions to instances of \tq.
\end{corollary}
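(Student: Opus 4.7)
The plan is to obtain the corollary as an immediate contrapositive consequence of Lemma~\ref{l:last}. The key observation is that the instance constructed in the proof of Lemma~\ref{l:last} already encodes a linear-time reduction from integer multiplication to the marginal problem for \tq. Indeed, given two $n$-bit odd integers $M$ and $N$, each of the $2n+1$ transfer matrices $\xT_{i-1,i}$, $\xT'_{0,1}$, and $\xT_{n+i,n+1+i}$ has $O(1)$-bit rational entries determined directly by a single bit of $M$ or $N$, so the corresponding \tq\ instance on $n' = 2n+2$ qubits and $m' = 2n+1$ rank-$1$ constraints can be written down in $O(n)$ time. Moreover, once the marginal $\ket{\psi_q}$ on the designated qubit $q = 2n+1$ is returned in minimal form, the numerator of its $\ket{1}$ coefficient equals $\pm(2^n+MN)$, so $MN$ can be read off by a single subtraction of $2^n$ and a sign fix in $O(n)$ additional bit operations.

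To conclude the contrapositive, I would suppose for contradiction that some algorithm $\mathbf{A}$ produces single-qubit marginals of satisfying solutions to \tq\ in time $O(m+n)$. Composing the linear-time reduction above with $\mathbf{A}$ (applied to the instance just described, for which $n' + m' \in O(n)$), followed by the $O(n)$ post-processing step, would yield an algorithm computing the product $MN$ of two arbitrary $n$-bit integers in total time $O(n)+O(n'+m')+O(n) = O(n)$. Since multiplication is trivially $\Omega(n)$, this would be a $\Theta(n)$-time $n$-bit multiplication algorithm, contradicting the standing hypothesis. Hence no such $\mathbf{A}$ can exist.

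The one point demanding care is the output convention: the extraction step relies on the returned marginal being in minimal form (normalised, reduced rationalised coefficients, over the minimal algebraic extension), so that the $\ket{1}$ numerator literally equals $\pm(2^n+MN)$ rather than a rescaled or embedded variant. This is exactly the convention under which Lemma~\ref{l:last} is stated, and it is implicit in what it means to ``present'' a marginal in the corollary; no additional arguments beyond Lemma~\ref{l:last} are needed. The main ``obstacle'' is thus merely bookkeeping, to confirm that both the forward reduction and the postprocessing step fit within $O(n)$ bit operations and that the output convention is consistent throughout.
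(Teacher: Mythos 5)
Your proposal is correct and follows essentially the same route the paper intends: Lemma~\ref{l:last} already supplies the linear-time reduction from $n$-bit multiplication to presenting the marginal $\ket{\psi_q}$ in minimal form on an instance with $m \in O(n)$, and the corollary is just its contrapositive combined with the trivial $\Omega(n)$ lower bound for multiplication. The only quibble is a harmless miscount ($m' = 2n+2$ constraints, since $\Pi'_{0,1}$ is added on top of the $2n+1$ consecutive-pair constraints), which does not affect the argument.
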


We would also like to show lower bounds for algorithms such as \solveq, which do not necessarily compute its output in reduced form, but which does compute an \emph{explicit} output, in the sense of presenting a complete description of a satisfying solution via tensor factors.
We may obtain such lower bounds even for algorithms which produce non-normalised outputs, as follows.

\begin{lemma}\label{l:spaceLowerBound}
    There exist instances of \tq\ on $n$ vertices and $m\in O(n)$ clauses, such that an explicit rationalised (but not necessarily normalised) assignment for a satisfying state requires $\Omega(n^2)$ bits.
\end{lemma}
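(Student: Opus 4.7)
The plan is to reuse the chain construction from Lemma~\ref{l:last} essentially unchanged, but to make a specific choice of the integer $M$ which ensures that the assignment on \emph{every} intermediate qubit is large, not just the final one. Then the $\Omega(n^2)$ bound is obtained by summing the bit-cost across all the qubits in (at least) the first half of the chain.

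First I would choose $M := 2^n - 1$, so that $M_j = 1$ for every $0 \le j < n$; the choice of $N$ is irrelevant to the argument, so set e.g.\ $N := 1$. With this choice, the walk-operator formula for $\xT_{[0,i]}$ derived in the proof of Lemma~\ref{l:last} has first column $(1,\ 2^i - 1)^\top$ for each $1 \le i \le n$. Since the extra projector $\Pi'_{0,1}$ forces qubit $0$ into $\ket{0}$ and all transfer operators along the chain are invertible, the unique product satisfying state assigns qubit $i$ to $\ket{\psi_i} \propto \ket{0} + (2^i - 1)\ket{1}$ for $1 \le i \le n$. Any other satisfying state is entangled across the entire chain (by the Laumann-style argument at the end of Lemma~\ref{l:last}), and therefore admits no representation as an explicit product of tensor factors, so any algorithm producing such an explicit representation must output the product state above.

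Next I would lower-bound the bit-length of any rationalised representation of $\ket{\psi_i}$, even under arbitrary rescaling (since without normalisation the vector is defined only up to a scalar). Writing the two coefficients as $c_0 = p_0/q_0$ and $c_1 = p_1/q_1$ with $p_j \in \Z$ and $q_j \in \Z_{>0}$, the ratio constraint $c_1/c_0 = 2^i - 1$ forces $p_1 q_0 = (2^i - 1)\, p_0 q_1$. Thus $p_1 q_0$ is a nonzero integer divisible by $2^i - 1$ and hence has absolute value at least $2^i - 1$, so $\lceil \log_2 |p_1| \rceil + \lceil \log_2 q_0 \rceil \ge i$. Consequently the representation of $\ket{\psi_i}$ takes $\Omega(i)$ bits regardless of how the vector is rescaled.

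Summing over $i \in \{1, \ldots, n\}$ yields $\sum_{i=1}^n \Omega(i) = \Omega(n^2)$ bits, as required. The only step that is not a direct reuse of Lemma~\ref{l:last} is the rationalised-representation lower bound, and this is the place one has to be slightly careful — allowing an arbitrary common rescaling might conceivably shift bits between numerator and denominator, or between the two coordinates, but the divisibility observation above shows that no such shift can avoid paying $\Omega(i)$ bits on qubit $i$. Everything else (uniqueness of the product solution, invertibility of the transfer matrices, and the fact that the only competing satisfying states are globally entangled) is inherited verbatim from the preceding lemma.
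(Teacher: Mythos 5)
Your proof is correct, and it reuses the same chain construction as the paper (inherited from Lemma~\ref{l:last}, with the states $\ket{\psi_i} \propto \ket{0} + M^{(i)}\ket{1}$ where $M^{(i)} = \sum_{t=1}^i M_{n-t}2^{i-t}$), but it justifies the per-qubit $\Omega(i)$ bound by a genuinely different argument. The paper leaves $M$ as an arbitrary odd $n$-bit integer and argues that the partial sums $M^{(i)}$ cannot be represented more succinctly \emph{in the worst case}, formalised via a Kolmogorov-complexity observation about the pair $(1/\alpha_i, M^{(i)}/\alpha_i)$. You instead fix $M = 2^n-1$ (so $M^{(i)} = 2^i - 1$, a highly compressible number to which the Kolmogorov argument would not apply) and observe that the \emph{rationalised form itself} forces large integers: from $p_1 q_0 = (2^i-1)p_0 q_1$ with $p_0 q_1 \ne 0$ you get $|p_1 q_0| \ge 2^i - 1$, so writing $p_1$ and $q_0$ in positional notation costs $\Omega(i)$ bits no matter how the vector is rescaled. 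This is more elementary and is arguably better matched to the lemma's actual claim, which concerns the literal bit-length of an explicit rationalised expression rather than its description length; the paper's incompressibility argument buys robustness against arbitrary re-encodings of the coefficients, which the lemma as stated does not require. Both versions sum the per-qubit bound over a constant fraction of the chain to conclude $\Omega(n^2)$, and your handling of the competing (globally entangled) solutions and of the uniqueness of the product solution is inherited correctly from Lemma~\ref{l:last}.
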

\begin{proof}
		We may simplify the construction of Lemma~\ref{l:last} by omitting the qubits $n+1$, \ldots, $2n+1$ and the projectors which act on them.
		This yields an instance in which there is a unique product solution (with all other solutions requiring a vector of dimension $2^{n+1}$ to represent).
		In this product state, the qubit $n$ is in a state $\ket{\psi_n} \propto \ket{0} + M\ket{1}$.
		More generally, each qubit $1 \le i \le n$ is in a state
		\begin{equation}
		  \ket{\psi_i} \propto \ket{0} + M^{(i)} \ket{1}
		\end{equation}
		where $M^{(i)} = \sum_{t=1}^i M_{n-t}2^{i-t}$.
		As $M_{n{-}1} M_{n{-}2} \cdots M_2 M_1 \in \{0,1\}^{n-1}$ may be an arbitrary $n-1$ bit string, and as we require the tensor factors on the qubits $i$ to be presented independently of one another, the integers $M^{(i)}$ cannot be represented any more succinctly in the worst case; at best, by applying arbitrary scalar factors, we may consider representations $\ket{\psi_i} = \tfrac{1}{\alpha_i} \ket{0} + \tfrac{M^{(i)}}{\alpha_i} \ket{1}$, in which the representation of the $\ket{1}$ component of $\ket{\psi_i}$ may be reduced if $\alpha_i$ divides $M^{(i)}$, but at the cost of increasing the size of the representation of the $\ket{0}$ component. {(More formally, if the pair $(1/\alpha_i,M^{(i)}/\alpha_i)$ has asymptotically smaller Kolmogorov complexity than the pair $(1,M^{(i)})$, we would have a contradiction, since the former allows us to extract $M^{(i)}$ --- thus, we would have a shorter description of $M^{(i)}$ than its Kolmogorov complexity allows.)}
		Thus, for any constant $0 < \alpha < 1$, the qubits $\lfloor \alpha n \rfloor < i < n$ all require $\Omega(n)$ bits to represent, yielding a total lower bound of $\Omega(n^2)$.
\end{proof}
\begin{corollary}
  Up to $\Omega(\log(n)^{1 + o(1)})$ factors, \solveq\ is optimal among algorithms which present explicit expressions for satisfying assignments.
\end{corollary}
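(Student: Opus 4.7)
My plan is to derive the corollary as a direct consequence of combining the output-size lower bound from Lemma~\ref{l:spaceLowerBound} with the bit-complexity upper bound on \solveq\ from Theorem~\ref{thm:main}. First, I would observe that by Lemma~\ref{l:spaceLowerBound}, there exist \tq\ instances on $n$ qubits and $m\in O(n)$ constraints such that \emph{any} algorithm producing an explicit, rationalised tensor-factor representation of a satisfying state must write $\Omega(n^2)$ bits in the worst case. Under the random-access machine model used throughout the paper, writing a bit of output requires at least unit time, so this immediately yields an $\Omega(n^2)$ runtime lower bound for any such algorithm.

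Next, I would upper-bound the runtime of \solveq\ on this same family of instances. These instances have $m \in O(n)$ and all projector coefficients in $\Q$ (so the specification of $\F$ is trivial and of constant size). Theorem~\ref{thm:main} gives a bit-complexity of $O((n+m)M(n))$, which specialises here to $O(n M(n))$. Invoking the Fürer bound $M(n) \in O(n \log(n)\, 2^{O(\log^*(n))})$~\cite{F07} mentioned in the ``Significance and open questions'' discussion, and noting that $\log^*(n)$ grows strictly slower than $\log\log(n)$, we have $2^{O(\log^*(n))} = (\log n)^{o(1)}$. Thus $M(n) \in O(n (\log n)^{1+o(1)})$, and \solveq\ runs in time $O(n^2 (\log n)^{1+o(1)})$ on these instances.

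Comparing, the ratio of the \solveq\ upper bound $O(n^2 (\log n)^{1+o(1)})$ to the lower bound $\Omega(n^2)$ is $O((\log n)^{1+o(1)})$, which yields the claimed optimality up to $(\log n)^{1+o(1)}$ factors. There is no real obstacle to this argument; the only subtlety worth spelling out is ensuring that the phrase ``present explicit expressions for satisfying assignments'' in the corollary is understood in the same rationalised, per-tensor-factor sense used in Lemma~\ref{l:spaceLowerBound}, so that the output-size lower bound genuinely applies to the class of algorithms being compared against \solveq. Given that \solveq\ itself produces output in this same format (rationalised tensor factors, though not in reduced form), this framing is the natural one, and the corollary follows.
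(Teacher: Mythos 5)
Your proposal is correct and follows essentially the same route as the paper's intended argument: combine the $\Omega(n^2)$ output-size lower bound of Lemma~\ref{l:spaceLowerBound} with the $O((n+m)M(n))\subset O(n^2\log(n)\,2^{O(\log^*(n))})$ upper bound of Theorem~\ref{thm:main} on the $m\in O(n)$ family, and observe that the gap is a $(\log n)^{1+o(1)}$ factor. Your explicit note that $2^{O(\log^*(n))}=(\log n)^{o(1)}$, and the caveat about matching the ``explicit rationalised tensor-factor'' output convention, are both correct and consistent with the paper.
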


\section*{Acknowledgements}
This project was initiated while NdB was affiliated with the Centrum Wiskunde \& Informatica.
NdB thanks Rick, Linda, Colin, and Michelle de~Beaudrap for hospitality during an academic absence, and acknowledges support from the UK Quantum Technology Hub project NQIT.
SG thanks Ronald de Wolf and Centrum Wiskunde \& Informatica for their hospitality. SG is supported by NSF grant CCF-1526189.

\bibliographystyle{alpha}
\bibliography{Sevag_Gharibian_Central_Bibliography_Abbrv,Sevag_Gharibian_Central_Bibliography}

\appendix

\section{Details of runtime analysis}
\label{apx:runtime}
\subsection{Complexity of \solveq\ in terms of field operations}\label{sscn:fieldops}
We now show that \solveq\ requires $O(n+m)$ operations over $\complex$, for $n$ and $m$ the number of variables and clauses, respectively.
Below, we let $G'$ denote the interaction graph induced by vertices which remain unassigned at any fixed point in the algorithm's execution.

First, note that as we require the coefficients to be drawn from a number field $\F$ (\ie,~a finite-degree extension of the rational numbers $\Q$; see Section~\ref{scn:preliminaries}), they cannot encode any uncomputable numbers or difficult computations.
In particular, as we note in Section~\ref{sscn:bitcomplexity}, they can all be performed by deterministic polynomial time algorithms; and while \solveq\ may involve operations on coefficients outside of $\F$, these are all performed in an easily computed representation of a field extension.
We thus consider field-operation complexity a meaningful measure of the complexity of our algorithm (and that of Bravyi~\cite{B06}).

As each constraint $\Pi_{ij}$ acts on a constant number of qubits, elementary algebraic operations involving these constraints and their associated transfer matrices --- such as computing ranks in the preprocessing stage, determining if a constraint is a product constraint in Step~1, computing the eigenvectors of a cycle matrix in Step~2, multiplying vectors by transfer matrices in CRs, \etc\ --- require a constant number of field operations.

The preprocessing phase, described in Section~\ref{scn:alg}, is performed as follows.
We collect together the constraints on each pair of qubits $\{u,v\}$ in the form of a partial sum $S_t$ of the first $t$ constraints on $\{u,v\}$.
We compute $S_t$ by including only those projectors $\Pi_i$ for which ${\mathrm{rank}(\Pi_i {\;\!+\;\!} S_{i{-}1})} > {\mathrm{rank}(S_{i{-}1})}$: projectors $\Pi_i$ which fail this test do not restrict the joint states of $u$ and $v$ any further.
Having obtained a positive constraint operator $S_{uv} \in \lin{\cH \ox \cH}$ of maximum possible rank, we perform the appropriate reductions or reject outright, as appropriate.
Apart from the CRs which may be performed after assignments, the total work per constraint is $O(1)$, for a total contribution of $O(m)$.

Consider the total cost of all chain reactions, either directly performed or simulated, at any of the stages in \solveq.
We take care to simulate CRs in parallel, because it is possible for a given CR to traverse a constant fraction of the edges in $G'$ before detecting a conflict.
Following~\cite{EIS76}, we interleave the simulations of the two CRs, so that these simulations only last asymptotically as long as the CR which terminates without conflict.
As all of the edges traversed by the conflict-free CR are removed from $G'$, the total length of time of the simulated CRs is $O(m)$.

At Step~2 of \solveq, in which we attempt to find a discretising cycle, $G'$ is a disjoint union of irreducible instances of \tq.
Theorem~\ref{thm:DFSdiscret} (Cycle Discovery Theorem) guarantees that, if a discretizing cycle exists in the same component as a vertex $v \in V$, it can be found with a single depth-first search from $v$.
Moreover, as the transfer matrices for all remaining constraints are full rank in this case, a single conflict-free CR suffices to successfully assign satisfying states to all vertices in $G'$.
The cost of the DFS in the component of $v$ is asymptotically bounded by the number of edges in that component, except if $v$ happens to be isolated; thus its cost is $O(n+m)$.

The final stage of the algorithm, in which we renormalise the output vectors for each qubit in the original instance, can evidently be performed in $O(n)$ field operations.

\subsection{Complexity of \solveq\ in terms of bit operations}\label{sscn:bitcomplexity}

We now investigate the bit complexity of \solveq, for the case where the projectors are drawn from $\Q$.
(For product projectors with coefficients over an arbitrary number field, the analysis of Section~\ref{scn:runtime} is already sufficient, given that the transfer matrices are all $2 \x 2$ matrices.)

In a nutshell, the number of bit operations required is similar to the number of field operations, except with an increasing overhead as a result of performing arithmetic over the number field $\F$.
For instance: in a depth-first search to try to discover a discretising cycle, computing the transfer matrix along a path $P$ of length $\ell$ yields a matrix $\xT_P$ whose entries may require $O(\ell)$ bits to describe.
Calculations involving the determinants of such matrices may then involve multiplications of $O(\ell)$-bit integers.

\paragraph{Representing coefficients.} As mentioned earlier, we assume the algebraic coefficients in the input are given as elements of a finite-degree field extension $\F{\;\!:\;\!}\Q$.
As we mention in Section~\ref{scn:preliminaries}, we suppose that the number field $\F$ is presented as part of the input, by means of a minimal polynomial $p \in \Q[x]$ for which $\F \cong \Q[x] / p$, together with a specification of how $\F$ embeds into $\C$~\cite{cohen93}.
Let $\omega \in \F$ be the root of $p$ which corresponds to $x + (p) \in Q[x]/p$ with respect to this embedding.
As part of our requirement that the constraints in the input are all specified in $O(1)$ space, we require that the minimal polynomial and a description of the embedding of $\F$ into $\C$ (\eg,~by a disambiguating approximation to $\omega$ in $\Q[i] \subset \C$) are also specified with at most some constant number of bits.
Let $d = [\F{\!\;:\!\;}\Q] = \deg p$ represent the degree of the extension from which these coefficients are taken.
We represent coefficients $\alpha \in \F$ in a rationalised form
$
	\alpha = \tfrac{1}{\mu} c(\omega)
$
for $\mu > 0$ an integer, and $c(x)$ a polynomial with integer coefficients with degree less than $d$.

Given two coefficients $\alpha = {\tfrac{1}{\mu}c(\omega)}$ and $\alpha' = {\tfrac{1}{\mu'}c'(\omega)}$, we compute their sum as $\alpha+\alpha'={\tfrac{1}{\mu \mu'}\bigl[\mu' c(\omega)} + {\mu c'(\omega)\bigr]}$, and their product as $\alpha \alpha' = {\tfrac{1}{\mu \mu'} \bigl[ c(\omega) c'(\omega) \bigr]}$.
We may compute $\mu' c + \mu c' \in \Z[x]$ by taking the sums of the integer coefficients, represented as integer vectors.
The representation of $c(\omega) c'(\omega)$ may be evaluated as a formal multiplication of polynomials, resulting in a polynomial of degree at most $2d$: this is then reduced to a polynomial of degree less than $d$, by repeated application of the identity $p(\omega) \omega^t = 0$.
This eliminates all terms of the formal product where $\omega$ has degree higher than $d$, and contributes to the lower-order terms in $\omega$.
For each term $\omega^t$ for $t > d$, reducing it to a normal form by this substitution involves expansion to a polynomial with at most $d$ terms.
This requires $O(d^2) = O(1)$ arithmetic operations on integers.
Then the complexity of putting $\alpha + \alpha'$ or $\alpha \alpha'$ into normal form is dominated by the complexity of the integer arithmetic associated with the integer coefficients of $c$ and $c'$, and the integers $\mu$ and $\mu'$.

%
%


%

\paragraph{Coefficients of transfer operators.}
We represent the rank-1 projectors in an instance of \tq\ by any one of their rows $\bra{\eta}$, which also have coefficients drawn from $\F$.
We may then use $\bra{\eta}$ to compute (an operator proportional to) the transfer matrix corresponding to the same projector.
Specifically, for any constraint $\Pi_{uv} \propto \ketbra{\eta}{\eta}$, the corresponding transfer matrix $\xT_{u,v}$ may be computed (up to an insignificant scalar factor of $\sqrt 2$) as
\begin{equation}
	\label{eqn:computeXferMtx}
	{\xT_{u,v} = \Bigl[ \bra{\eta}_{u,v} \ox I_{v'} \Bigr] \Bigl[ I_u \ox \Bigl(\ket{01}-\ket{10}\Bigr)_{\!v,v'}  \Bigr].}
\end{equation}
A normal-form representation $\tfrac{1}{\mu} c(\omega)$ of the coefficients of these vectors may be computed in constant time, where $\mu$ and the coefficients of $c$ are themselves $O(1)$.

%

When computing transfer matrices during a CR, we perform a chain of length at most $O(n)$ additions and multiplications of coefficients in $\F$. Thus, even if the representation of constraints in the input each require only $O(1)$ space, the same is not true of the assignments $\ket{\psi_u}$ which may be given to the qubits, nor of the transfer matrices which may be computed to determine these assignments.
As we note in Section~\ref{scn:runtime}, the arithmetic operations in Step~1 may each be performed in $O(1)$ time, by avoiding such an increase in the representation-size of coefficients; but such an increase is unavoidable for iterated matrix products of non-singular transfer matrices, such as may occur in the preprocessing phase and in Step~2.
Iterated multiplication of $\ell$ matrices of shape $2\x2$, each of which are each expressible in $O(1)$ space, can be performed in time $O(\ell^2)$, with the result having coefficients expressible in space $O(\ell)$.

\paragraph{Operations in possibly-redundant quadratic field extensions.}
In solving the (quadratic) characteristic polynomial of a transfer operator, we may need to represent a square root $\sqrt{D}$ for $D \in \F$.
Absent an algorithm to efficiently compute the minimal polynomial of $\sqrt{D}$ over $\Q$, it may be difficult to determine whether $\sqrt{D}$ is an element of $\F$, which would be necessary to compute a representation of $\sqrt D$, either in $\F$ or in a field extension with easily computed normal forms.
Our approach is to simply ignore the question of representations with normal forms, and content ourselves with representations in which arithmetic operations and equality tests can be performed asymptotically as efficiently as over $\Q$.

For an input representation over $\F = \Q[\omega]$, let $\E$ be the minimal subfield of $\C$ which contains $\F$ and $\sqrt D$.
(In the case that $\sqrt D \in \F$, this field is $\F$ itself.)
Write $\omega' = \sqrt D$.
Consider a representation of the form $\alpha = \tfrac{1}{\mu}f(\omega, \omega')$, where $\mu \in \Z$ and $f \in \Z[x_1, x_2]$.
We may require that $f$ is at most linear in $x_2$, in which case we may write
\begin{equation}
	f(x_1, x_2) = f_0(x_1) + f_1(x_1) x_2.
\end{equation}
Write $a_0 = f_0(\omega)$ and $a_1 = f_1(\omega)$: then $\alpha = \tfrac{1}{\mu}(a_0 + a_1 \sqrt D)$.
\begin{itemize}
\item
	To test equality between $\alpha = \tfrac{1}{\mu}(a_0 + a_1 \sqrt D) \in \F$ and $\beta = \tfrac{1}{\nu}(b_0 + b_1 \sqrt D) \in \F$, we evaluate whether
	\begin{equation}
			\nu a_0 - \mu b_0 = (\mu b_1 - \nu a_1) \sqrt D,
	\end{equation}
	which is satisfied if either $\nu a_0 - \mu b_0 = \mu b_1 - \nu a_1 = 0$, or if $\sqrt{D} \in \F$ and in particular
	\begin{equation}
	\label{eqn:identifySqrtD}
	  \sqrt D = \frac{\nu a_0 - \mu b_0}{\mu b_1 - \nu a_1}\,.
	\end{equation}
	A rationalised form of the right-hand side of Eqn.~\eqref{eqn:identifySqrtD} can be evaluated with field operations in $\F$, with a bit-cost depending on the size of the representations of $\alpha$ and $\beta$.
	If the squares of the left- and right-hand side are equal, it suffices to verify that the left- and right-hand sides are in the same quadrant of the Argand plane (rather than opposite quadrants).
	Taking the principal square root, $\sqrt D$ will always have complex argument ranging over $(-\tfrac{\pi}{2},\tfrac{\pi}{2}]$; it then suffices to determine whether the complex argument of the expression on the right-hand side of Eqn.~\eqref{eqn:identifySqrtD} is in this range, or in the complement.
	As $\omega$ is specified in $O(1)$ bits at the input, we may evaluate which is the case in time $M(n)$, for $\alpha$, $\beta$ having representations of size $O(n)$.
	Given that the right-hand side is well-defined, equality fails in Eqn.~\eqref{eqn:identifySqrtD} if and only if $\alpha \ne \beta$.
\item
	To add $\alpha = \tfrac{1}{\mu}(a_0 + a_1 \sqrt D) \in \F$ and $\beta = \tfrac{1}{\nu}(b_0 + b_1 \sqrt D) \in \F$, we simply compute
	\begin{equation}
	  \alpha + \beta = \tfrac{1}{\mu\nu}\Bigl[(a_0 \nu + b_0 \mu) + (a_1 \nu + b_1 \mu) \sqrt D\Bigr]
	\end{equation}
	as we would for a genuine quadratic extension; similarly,
	\begin{equation}
	  \alpha \beta = \tfrac{1}{\mu \nu}\Bigl[(a_0b_0 + a_1 b_1 D) + (a_1 b_0 + a_0 b_1)\sqrt D\Bigr],
	\end{equation}
	where in each case we reduce these operations to arithmetic in $\F$, at a cost depending on the representation of $\alpha$ and $\beta$ in $\F$.
\end{itemize}
In particular: to perform equality tests or arithmetic operations between two elements in $\E$, which each require $O(n)$ bits to represent, involves a constant number of similar operations over $\F$.
Thus the cost of these operations is $O(M(n))$.
In the case that $\F = \Q$ rather than $\F = \Q[\omega]$ for $\omega \notin \Q$, similar arguments hold.

\paragraph{Complexity of computing eigenvalues and testing CRs.}
In Step~2, solving the (quadratic) characteristic polynomial of $\xT_C$ for a cycle starting and ending at a vertex $u$ requires evaluating the quadratic formula on the coefficients of $\xT_C$.
Evaluating the discriminant $D = (\Tr \xT_C)^2 - 4(\det \xT_C)$ of the characteristic polynomial can be performed in time $O(M(\ell))$, where $M(\ell)$ is the cost of the $\ell$-bit integer multiplications which are entailed in computing the normal forms involved.
Using the extension representation described above, we simply define $\E_C = \F[\sqrt D]$.
We then use a similar rationalised form representation for elements of $\E_C$ as for $\F$, and present coefficients in rationalised form over $\E_C$; arithmetic over $\E_C$ may now involve addition and multiplication of $\ell$-bit integers at each step.
Then the cost of each arithmetic operation or equality can only be bounded by $O(M(\ell))$.


The eigenvectors $\ket{\psi_{u,1}}$ and $\ket{\psi_{u,2}}$ of $\xT_C$ (which are not necessarily unit vectors) are elements of $\E_C$ with space complexity $O(\ell)$.
Application of the transfer matrices to $\ket{\psi_{u,j}}$ will again yield vectors of size $O(\ell)$, albeit representing vectors on $\E_C$.
Using the rationalised form described above, we may test whether two assignments to a vertex conflict simply by equality testing, which again may be performed in time $O(M(\ell))$.

Further CRs may give rise to different quadratic extensions $\E_{C'}$ of $\F$ which are in principle independent of one another.
However, by construction, each CR will involve only a quadratic extension of $\F$: By the Set-and-Forget Theorem (Theorem~\ref{thm:setandforget}), if a conflict-free CR is found from $u$, the state of any qubit not involved in the CR is not impacted by the assigned state on $u$.
Thus the extension $\E_C$ of one CR is not important to the operations performed in further CRs.

The potentially most expensive arithmetic operations of \solveq\ are then evaluating the quadratic formula, arithmetic on coefficients in the extensions $\E_C$, and equality testing to test for conflict between potentially different assignments to a qubit.
For a CR which visits $\ell$ distinct qubits, these each require time $M(\ell) \in \Omega(\ell)$. The complexity of \solveq\ may then be bounded by $O\bigl((n+m) M(n)\bigr)$, arising from performing $O(M(\ell))$-cost operations at most a constant number of times for each vertex and edge, where $\ell \in O(n)$.

\paragraph{Complexity of producing a normalised state as output.}
With no further asymptotic time complexity, we may compute unit vectors for the representation of the output.
If all vertices $v \in V$ have been assigned a vector $\ket{\psi_v}$, evaluating $\| \psi_v \| = \sqrt{\bracket{\psi_v}{\psi_v}}$ involves multiplication and square roots of integers of size $O(n)$.
Each vector $\ket{\psi_v}$ draws coefficients from either $\F$ or some quadratic extension $\E_C$; to express $\sqrt{\bracket{\psi_v}{\psi_v}}$ exactly may require a further (possibly trivial) extension, $\D_v := \E_C\bigl[\!\!\:\sqrt{\bracket{\psi_v}{\psi_v}}\,\bigr]$.
This final extension may be different for each vertex $v$, but can be represented by a minimal polynomial with $O(n)$ bits.
Evaluating $\tfrac{1}{\| \psi_v \|} \ket{\psi_v}$ then involves a final computation requiring time $O(M(n))$, so that this final normalisation step takes only $O(n M(n))$ time.

\paragraph{Size of the output.} The output consists of a list of vectors $\ket{\psi_v}$, each of which consists of two complex numbers drawn from some field extension $\D_v{\;\!:\;\!}\F$ of degree at most $4$.
The field $\F$ is determined from the input, and by hypothesis can be expressed in $O(1)$ bits; the extension to get $\D_v$ can be done using at most two minimal polynomials which can be expressed in $O(n)$ bits each.
The coefficients in $\D_v$ used by $\ket{\psi_v}$ are of the form $\alpha = \tfrac{1}{\mu} c$, where $c$ is an integer polynomial in the extension elements, where $|c|, |\mu| \in O(n)$.
Thus the space required to express each $\ket{\psi_v}$ is $O(n)$; the total space used by the output is then at most $O(n^2)$.

\end{document}